\documentclass[12pt]{article}
\usepackage[a4paper, total={6.5in, 9.5in}]{geometry}
\usepackage{amsthm,amsmath,amssymb}
\usepackage{algorithm}
\usepackage{algpseudocode}
\usepackage{cleveref}
\usepackage{authblk}
\newcommand{\dpf}{{\tt path}}
\newcommand{\minham}{{d_{\min}}}
\newcommand{\sumham}{{d_{\rm sum}}}

\newtheorem{theorem}{Theorem}
\newtheorem{lemma}{Lemma}
\newtheorem{corollary}{Corollary}
\newtheorem{observation}{Observation}

\title{Finding Diverse Trees, Paths, and More}
\author[1]{Tesshu Hanaka}
\author[2]{Yasuaki Kobayashi}
\author[3]{Kazuhiro Kurita}
\author[4]{Yota Otachi}

\affil[1]{Chuo University}
\affil[2]{Kyoto University}
\affil[3]{National Institute of Informatics}
\affil[4]{Nagoya University}
\date{}

\begin{document}

\maketitle

\begin{abstract}
Mathematical modeling is a standard approach to solve many real-world problems and {\em diversity} of solutions is an important issue, emerging in applying solutions obtained from mathematical models to real-world problems.
Many studies have been devoted to finding diverse solutions.
Baste et al. (Algorithms 2019, IJCAI 2020) recently initiated the study of computing diverse solutions of combinatorial problems from the perspective of fixed-parameter tractability.
They considered problems of finding $r$ solutions that maximize some diversity measures (the minimum or sum of the pairwise Hamming distances among them) and gave some fixed-parameter tractable algorithms for the diverse version of several well-known problems, such as {\sc Vertex Cover}, {\sc Feedback Vertex Set}, {\sc $d$-Hitting Set}, and problems on bounded-treewidth graphs.
In this work, we further investigate the (fixed-parameter) tractability of problems of finding diverse spanning trees, paths, and several subgraphs.
In particular, we show that, given a graph $G$ and an integer $r$, the problem of computing $r$ spanning trees of $G$ maximizing the sum of the pairwise Hamming distances among them can be solved in polynomial time. 
To the best of the authors' knowledge, this is the first polynomial-time solvable case for finding diverse solutions of unbounded size.
\end{abstract}

\section{Introduction}

% general importance of diversity (in combinatorial optimization)
To solve real-world problems, we often formulate problems as mathematical models and then develop algorithms working on these mathematical models.
In this context, algorithms are usually designed to find a {\em single} (near) optimal solution by optimizing an objective function formulated in a mathematical model.
However, such a solution may be inadequate for original real-world problems since mathematical models ``approximately'' formulate them and some tacit rules and ambiguous constraints inherent in real-world problems are usually ignored for taming mathematical models.
One possible (and straightforward) solution to this issue is to find {\em multiple} solutions rather than a single solution.
The concept of {\em $k$-best enumeration}~\cite{Eppstein:k-best:2016} is a promising approach along this line.
In this approach, given a parameter $k$, we compute a set of $k$ distinct solutions, such that every solution in the set is better than any solutions not in the set.
This is a natural extension of usual optimization problems, and many algorithms for finding $k$-best solutions have been developed for classical combinatorial optimization problems~\cite{Murty:Letter:1968,Lawler:procedure:1972,Gabow:Two:1977} and other problems arise in several fields~\cite{Hara:Enumerate:2017,Lindgren:Exact:2017}.

However, solutions obtained by $k$-best enumeration algorithms might be similar to each other since those solutions are essentially made by some local modifications.
The basic strategy of $k$-best enumeration algorithms is that we first find a single optimal solution $S = \{x_1, \ldots, x_t\}$ and then, for each $1 \le i \le t$, compute an optimal solution that includes $\{x_1, \ldots, x_{i-1}\}$ but excludes $x_i$.
The entire algorithm recursively computes $k$ solutions in this way.
Since the subsequent solution must contain $\{x_1, \ldots x_{i-1}\}$ and also tends to contain $\{x_{i+1}, \ldots, x_{t}\}$, these solutions would be similar, which would not fit for our original purpose.

% what is ``diversity''
To address this issue, {\em diversity} among solutions is an important factor, and a substantial effort is dedicated to finding multiple solutions that optimize diversity measures in the literature \cite{Hebrard:Finding:2005,Danna:How:2009,Petit:Finding:2015,Baste:FPT:2019,Baste:Diversity:2020,Ingmar:Modelling:2020}.
Let $U$ be a finite set and let $\mathcal S \subseteq 2^U$ be the set of solutions in a combinatorial optimization problem.
Several diversity measures have been proposed.
Among others, many existing studies focus on maximizing the pairwise Hamming distances among solutions.
In particular, the following two diversity measures are widely used.
\begin{eqnarray*}
    \displaystyle\sumham(U_1, \ldots, U_r) &=& \sum_{1 \le i < j \le r} (|U_i \setminus U_j| + |U_j \setminus U_i|),\\
    \displaystyle\minham(U_1, \ldots, U_r) &=& \min_{1 \le i < j \le r} (|U_i \setminus U_j| + |U_j \setminus U_i|),
\end{eqnarray*}
where $U_1, \ldots, U_r \in \mathcal S$ are not necessarily distinct.
% These diversity measures are also used in evolutionary computing~\cite{Morrison:Measurement:2001}.

% related work
% \begin{itemize}
%     \item Diverse MIP solutions, $\sumham$~\cite{Danna:How:2009} 
%     \item \cite{Hebrard:Finding:2005} Diverse CSP solutions, $\minham$. Let $L$ be a NP language. Finding {Min-Diverse $L$ Solution} is ${\rm FP}^{{\rm NP}[\log n]}$-complete.
%     \item \cite{Petit:Enriching:2019}, \cite{Ingmar:Modelling:2020}, \cite{Lokketangen:distance:2005}, \cite{Petit:Finding:2015}
% \end{itemize}

Baste et al. \cite{Baste:FPT:2019,Baste:Diversity:2020} initiated the study of finding diverse solutions from the perspective of fixed-parameter tractability.
{\em Fixed-parameter tractability} is the central notion in parameterized complexity theory, which extends the notion of tractability (i.e., polynomial-time solvability) in classical complexity theory.
In this context, we consider problems that take an instance $I$ and a {\em parameter} $k$, and analyze the complexity of those problems in terms of the input size $|I|$ and parameter $k$.
We say that a problem is {\em fixed-parameter tractable (FPT) parameterized by $k$} if it admits an algorithm that runs in time $f(k)|I|^{O(1)}$, where $f$ is a computable function. See~\cite{Downey:Fundamentals:2013,Cygan:Parameterized:2015} for more information.

Baste et al.~\cite{Baste:FPT:2019,Baste:Diversity:2020} studied the diverse version of several problems that have been intensively studied in the parameterized complexity community.
They show that {\sc Diverse Vertex Cover} and {\sc Min-Diverse Vertex Cover} are fixed-parameter tractable parameterized by $k + r$, where {\sc Diverse Vertex Cover} and {\sc Min-Diverse Vertex Cover} asks for $r$ vertex covers $C_1, \ldots, C_r$ of $G$ with size at most $k$ that maximize $\sumham(C_1, \ldots, C_r)$ and $\minham(C_1, \ldots C_r)$, respectively.
More specifically, they gave $2^{kr}r^2n^{O(1)}$-time algorithms for both problems~\cite{Baste:FPT:2019}.
They also gave fixed-parameter algorithms for {\sc (Min-)Diverse Feedback Vertex Set} and {\sc (Min)-Diverse $d$-Hitting Set}~\cite{Baste:FPT:2019} parameterized by $k + r$ and $k + r + d$, respectively (See \cite{Baste:FPT:2019} for details).
Baste et al.~\cite{Baste:Diversity:2020} gave a general framework to maximize $\sumham$ for bounded-treewidth graphs.
Roughly speaking, on $n$-vertex graphs of treewidth $t$, they showed that if a single solution can be found in time $f(t)n^{O(1)}$, then one can find $r$ solutions $U_1, \ldots, U_r$  maximizing $\sumham(U_1, \ldots, U_r)$ in time $f(t)^rn^{O(1)}$. 

% contributions
\paragraph{Our Contributions}
In this paper, we further investigate the fixed-parameter tractability of problems of finding diverse solutions for several combinatorial problems. In particular, we consider the following problems:
\begin{itemize}
    \item {\sc Diverse Spanning Tree}: Given a graph $G$ and an integer $r$, the problem asks for $r$ spanning trees $T_1, \ldots, T_r$ in $G$ maximizing $\sumham(E(T_1), \ldots, E(T_r))$.
    \item {\sc Diverse $k$-Path} and {\sc Min-Diverse $k$-Path}: Given a graph $G$ and integers $k$ and $r$, the problem asks for $r$ paths $P_1, \ldots, P_r$ of length $k - 1$ in $G$ maximizing $\sumham(E(P_1), \ldots, E(P_r))$ and $\minham(E(P_1), \ldots, E(P_r))$, respectively.
    \item {\sc Diverse Matching} and {\sc Min-Diverse Matching}: Given a graph $G$ and integers $k$ and $r$, the problems ask for $r$ matching $M_1, \ldots, M_r$ of size $k$ in $G$ maximizing $\sumham(M_1, \ldots, M_r)$ and $\minham(M_1, \ldots, M_r)$, respectively.
    \item {\sc Diverse Subgraph Isomorphism} and {\sc Min-Diverse Subgraph Isomorphism}: Given graphs $G, H$ and an integer $r$, the problems ask for $r$ subgraphs $H_1, \ldots, H_r$ isomorphic to $H$ in $G$ maximizing $\sumham(E(H_1), \ldots, E(H_r))$ and $\minham(E(H_1), \ldots, E(H_r))$, respectively.
\end{itemize}

We show that {\sc Diverse Spanning Tree} is polynomial-time solvable.
The result can be extended to arbitrary matroids: we give a polynomial-time algorithm that, given a matroid $\mathcal M$ with an independence oracle and an integer $r$, computes $r$ bases $B_1, \ldots, B_r$ of $\mathcal M$ maximizing $\sumham(B_1, \ldots, B_r)$.
It is worth mentioning that several studies (e.g.~\cite{Abbassi:Diversity:2013,Borodin:Max-Sum:2017}) consider diversity maximization problems under matroid constraints and gave approximation algorithms for them.
However, our problems, {\sc Diverse Spanning Tree} and its generalization, are essentially different from those problems.
We also give a general framework to obtain diverse $r$ solutions of size $k$.
An illustrative example of this result is {\sc Diverse $k$-Path}.
The problem of finding a $k$-path (that is, a path of length $k - 1$) is one of the best-studied problems in parameterized algorithms.
The seminal work of Alon et al.~\cite{Alon:Color:1995} showed this problem can be solved in time $(2e)^kn^{O(1)}$, where $e$ is the base of the natural logarithm.
Their technique, {\em color-coding}, is of great importance for finding ``patterns'' of size $k$.
We exploit this technique to find diverse solutions.
Our general framework states that if we can find a single ``colorful pattern'' in time $f(k)n^{O(1)}$, then we can find diverse solutions in time $f(k, r)n^{O(1)}$ as well.
As applications, we show several fixed-parameter tractable algorithms for diverse problems, including {\sc (Min-)Diverse $k$-Path}, {\sc (Min-)Diverse Matching}, {\sc (Min-)Diverse Subgraph Isomorphism} of bounded-treewidth pattern graphs.

Very recently, Fomin el al.~\cite{Fomin:Diverse:2020} studied the problem of finding a pair of maximum matchings $M_1$ and $M_2$ with $\sumham(M_1, M_2) \ge k$ and showed that this problem is solvable in polynomial time for an arbitrary $k$ on bipartite graphs and fixed-parameter tractable parameterized by $k$ on general graphs.
Although their problem is restricted to the case $r = 2$, the size of matchings is not considered as a parameter.

% Due to the space limitation, some contents are deferred to the full version \cite{Hanaka:Finding:2020}.

\section{Preliminaries}
For an integer $k \ge 1$, we use $[k]$ to denote $\{1, 2, \ldots, k\}$.

% graphs
Let $G$ be an undirected graph.
We denote by $V(G)$ and by $E(G)$ the sets of vertices and edges of $G$, respectively.
For a vertex $v \in V(G)$, the set of neighbors of $v$ is denoted by $N_G(v)$ (i.e., $N_G(v) = \{w \in V(G): \{v, w\} \in E(G)\}$).
For $F \subseteq E(G)$, we denote by $G[F]$ the subgraph of $G$ consisting of edges in $F$: $G[F] = (V(F), F)$, where $V(F)$ is the set of end vertices of $F$.

% coloring
Let $U$ be a finite set and let $k$ be a positive integer.
A function $c: U \to [k]$ is called a {\em coloring} of $U$ and each integer in $[k]$ is called a {\em color}.
For a set of colors $C \subseteq [k]$, a subset of $U$ is said to be {\em $C$-colorful} ({\em with respect to $c$}) if every element in the subset has a distinct color and the set of colors used in the subset is exactly $C$.
We frequently consider vertex-colorings on graphs.
Let $H$ be a subgraph of $G = (V, E)$.
Consider a coloring $c: V \to [k]$ on vertices.
Note that $c$ is not necessarily proper, which means there may be two adjacent vertices with the same color.
For $C \subseteq [k]$, we say that $H$ is $C$-colorful if $V(H)$ is $C$-colorful.
We also say that $H$ is {\em colorful} if it is $C$-colorful for some $C \subseteq [k]$.

% matroids
Let $E$ be a finite set and let $\mathcal I \subseteq 2^{E}$ be a collection of subsets of $E$.
A pair $(E, \mathcal I)$ is called a {\em matroid} if the following three axioms hold: (M1) $\emptyset \in \mathcal I$, (M2) for $X, Y \subseteq E$ with $X \subseteq Y$, $Y \in \mathcal I$ implies $X \in \mathcal I$, and (M3) for $X, Y \in \mathcal I$ with $|X| < |Y|$, there is $e \in Y \setminus X$ such that $X \cup \{e\} \in \mathcal I$.
Each set in $\mathcal I$ is called an {\em independent set} and each inclusion-wise maximal set in $\mathcal I$ is called a {\em base} of $\mathcal I$.
It is easy to see that all bases of a matroid have the same cardinality. 
The {\em rank} of a matroid is the cardinality of a base of the matroid.
Note that $\mathcal I$ may have exponentially many subsets of $E$ in general.
However, algorithms described in this paper still run in time $|E|^{O(1)}$ when we are given $\mathcal I$ as an independence oracle that is assumed to be evaluated in time $|E|^{O(1)}$.
Given this, we say that an algorithm for problems on matroids runs in polynomial time if it runs in time $|E|^{O(1)}$ under this assumption.

Let $G = (V, E)$ be a multigraph (i.e., $G$ may contain parallel edges).
Define $\mathcal I_G$ as the collection of all subsets $F \subseteq E$ such that $G[F]$ has no cycles.
Then, pair $(E, \mathcal I_G)$ satisfies the above three conditions and hence it is a matroid, called a {\em graphic matroid}.
It is easy to see that if $G$ is connected, every base of $(E, \mathcal I_G)$ is a spanning tree of $G$.

\section{Finding Diverse Spanning Trees}
Let $G = (V, E)$ be a connected graph and let $r$ be an integer.
The goal of this section is to develop an algorithm for finding (not necessarily edge-disjoint) $r$ spanning trees $T_1, T_2, \ldots, T_r$ maximizing $\sumham(E(T_1), \ldots, E(T_r))$. 

\begin{theorem}\label{thm:diverse-tree}
    There is a polynomial-time algorithm that, given a graph $G$ and an integer $r$, computes $r$ spanning trees $T_1, \ldots, T_r$ of $G$ maximizing $\sumham(E(T_1, \ldots, E(T_r)))$.
\end{theorem}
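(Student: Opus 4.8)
The plan is to reduce the problem to computing a minimum-cost base of a matroid, which the greedy algorithm solves in polynomial time.

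First I would rewrite the objective. Let $n = |V(G)|$. Every spanning tree of $G$ has exactly $n-1$ edges, so for any two spanning trees $T_i,T_j$ we have $|E(T_i)\setminus E(T_j)| + |E(T_j)\setminus E(T_i)| = 2(n-1) - 2|E(T_i)\cap E(T_j)|$. Writing $m_e = |\{i \in [r] : e \in E(T_i)\}|$ and summing over all pairs $i<j$, one obtains
\[
\sumham(E(T_1),\dots,E(T_r)) = r(r-1)(n-1) - 2\sum_{e\in E}\binom{m_e}{2}.
\]
Hence maximizing $\sumham$ over all choices of $r$ spanning trees is the same as minimizing $\sum_{e\in E}\binom{m_e}{2}$, a quantity that depends on the trees only through the ``congestion profile'' $(m_e)_{e\in E}$.

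Second, I would build a gadget whose minimum-cost base encodes an optimal profile. Form the multigraph $G'$ from $G$ by replacing each edge $e$ by $r$ parallel copies $e^{(1)},\dots,e^{(r)}$, and assign $e^{(j)}$ the cost $j-1$. An $r$-tuple of (not necessarily distinct) spanning trees $T_1,\dots,T_r$ of $G$ is the same thing as a family of $r$ pairwise edge-disjoint spanning trees of $G'$: realize distinct uses of an edge of $G$ by distinct parallel copies, and note conversely that no spanning tree of $G'$ uses two parallel copies of one edge. By the matroid union theorem, the edge sets of $G'$ that decompose into $r$ forests are exactly the independent sets of the matroid $M(G')^{\vee r}$, the $r$-fold union of the graphic matroid of $G'$; since $G$ is connected, $G'$ has $r$ edge-disjoint spanning trees, so $M(G')^{\vee r}$ has rank $r(n-1)$ and each of its bases decomposes into $r$ forests of $n-1$ edges each, i.e.\ into $r$ spanning trees of $G'$, i.e.\ into an $r$-tuple of spanning trees of $G$. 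Under this correspondence the cost of a base $S$ equals $\sum_{e}\sum_{j:\,e^{(j)}\in S}(j-1) \ge \sum_e\binom{m_e}{2}$, with equality exactly when the copies of each $e$ used in $S$ are the $m_e$ cheapest ones; and because two parallel copies of an edge can always be exchanged inside a forest of $G'$ (hence inside a base of $M(G')^{\vee r}$), every minimum-cost base does use such prefixes. Consequently a minimum-cost base of $M(G')^{\vee r}$ decomposes into $r$ spanning trees of $G$ that minimize $\sum_e\binom{m_e}{2}$, hence maximize $\sumham$. For the algorithm itself: a matroid union is a matroid, and an independence oracle for $M(G')^{\vee r}$ runs in polynomial time (via the matroid union/partition algorithm, or by submodular minimization using $\mathrm{rank}_{M(G')^{\vee r}}(S) = \min_{S'\subseteq S}\bigl(|S\setminus S'| + r\cdot\mathrm{rank}_{M(G')}(S')\bigr)$), so the greedy algorithm computes a minimum-cost base, the same union algorithm decomposes it into $r$ spanning trees of $G'$, and projecting these down to $G$ yields the output, all in time polynomial in $n$ and $r$.

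I expect the crux to be the middle step: designing the gadget so that the convex objective $\sum_e\binom{m_e}{2}$ in the $m_e$'s is captured exactly by a \emph{linear} cost function on a matroid, and then verifying carefully both that bases of $M(G')^{\vee r}$ correspond to $r$-tuples of spanning trees of $G$ and that a minimum-cost base is forced to use ``prefix'' copies, so that its cost really equals $\sum_e\binom{m_e}{2}$. The remaining ingredients --- the identity for $\sumham$ and the greedy/matroid-union machinery --- are routine. The same argument applies verbatim with the graphic matroid of $G$ replaced by an arbitrary matroid $\mathcal M$ given by an independence oracle and ``spanning tree'' replaced by ``base'' (the parallel-copy construction guarantees $r$ disjoint bases exist), giving the announced extension to matroids.
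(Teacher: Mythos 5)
Your proposal is correct and follows essentially the same route as the paper: the identical identity reducing the problem to minimizing $\sum_e \binom{m_e}{2}$, the identical gadget of $r$ parallel copies with costs $0,1,\dots,r-1$, and the identical exchange argument forcing prefix copies, after which the paper invokes a cited theorem on minimum-weight disjoint bases where you spell out the equivalent matroid-union/greedy machinery explicitly.
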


The problem can be translated into words in matroid theory: Find $r$ bases $B_1, \ldots, B_r$ of a graphic matroid $(E, \mathcal I)$ that maximizes $\sumham(B_1, \ldots, B_r)$.
As a special case of our problem, the problem of finding {\em mutually disjoint} $r$ bases of a matroid can be solved in polynomial time by a greedy algorithm for the matroid union, provided that the membership of $\mathcal I$ can be decided in polynomial time.
More generally, the weighted version of this problem can be solved in polynomial time.

\begin{theorem}[\cite{Nash-Williams:application:1967}]\label{thm:tree:packing}
    Let $\mathcal M = (E, \mathcal I)$ be a matroid and let $w: E \to \mathbb R$.
    Suppose that the membership of $\mathcal I$ can be checked in polynomial time.
    Then, the problem of deciding whether there is a set of mutually disjoint $r$ bases $B_1, \ldots, B_r$ of $\mathcal M$ can be solved in polynomial time.
    Moreover, if the answer is affirmative, we can find such bases that minimize the total weight (i.e., $\sum_{1 \le i \le r} \sum_{e \in B_r} w(e)$) in polynomial time.
\end{theorem}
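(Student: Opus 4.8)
The plan is to reduce the problem to the \emph{matroid union theorem} of Nash--Williams and Edmonds. Let $\rho$ denote the rank function of $\mathcal M = (E, \mathcal I)$, and write $\rho(E)$ for the rank of the matroid. Consider $r$ copies of $\mathcal M$ on the common ground set $E$ and form their \emph{union matroid} $\mathcal M^{(r)} = (E, \mathcal I^{(r)})$, whose independent sets are exactly the subsets of $E$ that can be partitioned into $r$ members of $\mathcal I$, i.e.\ $\mathcal I^{(r)} = \{I_1 \cup \cdots \cup I_r : I_1, \ldots, I_r \in \mathcal I\}$. By the matroid union theorem, $\mathcal M^{(r)}$ is itself a matroid, with rank function
\[
  \rho^{(r)}(S) = \min_{T \subseteq S} \bigl( |S \setminus T| + r\,\rho(T) \bigr).
\]
Moreover, given an independence oracle for $\mathcal M$ that runs in polynomial time, the matroid partition algorithm tests membership in $\mathcal I^{(r)}$ --- and, when a set is independent, actually produces a partition witnessing this --- in time $|E|^{O(1)}$, by repeatedly searching for augmenting paths in an auxiliary exchange digraph.

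For the decision part, observe that each of the $r$ parts of a set in $\mathcal I^{(r)}$ has size at most $\rho(E)$, so $\rho^{(r)}(E) \le r\,\rho(E)$ always holds. If $\mathcal M$ admits $r$ mutually disjoint bases $B_1, \ldots, B_r$, their union lies in $\mathcal I^{(r)}$ and has size $r\,\rho(E)$, whence $\rho^{(r)}(E) = r\,\rho(E)$. Conversely, if $\rho^{(r)}(E) = r\,\rho(E)$, then any independent set of the union of this maximum size decomposes into $r$ members of $\mathcal I$ whose sizes sum to $r\,\rho(E)$; since each summand is at most $\rho(E)$, every part must have size exactly $\rho(E)$ and hence be a base. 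Thus $r$ disjoint bases exist if and only if $\rho^{(r)}(E) = r\,\rho(E)$, and both quantities are computable in polynomial time: $\rho(E)$ by the greedy algorithm on $\mathcal M$, and $\rho^{(r)}(E)$ by running the partition algorithm on all of $E$.

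For the weighted part, I would compute a minimum-weight base of the union matroid $\mathcal M^{(r)}$. Since the union is a matroid and we have a polynomial-time independence oracle for it, the standard greedy algorithm applies: sort $E$ in nondecreasing order of $w$ and add each element whenever it keeps the current set in $\mathcal I^{(r)}$, producing a base $B$ of $\mathcal M^{(r)}$ of minimum total weight (this is valid for arbitrary real weights, since all bases of a matroid have the same cardinality). When $\rho^{(r)}(E) = r\,\rho(E)$, this base has size $r\,\rho(E)$ and, as argued above, decomposes into $r$ bases $B_1, \ldots, B_r$ of $\mathcal M$, a decomposition that the partition algorithm outputs directly. Because the $B_i$ are disjoint, $\sum_{i} \sum_{e \in B_i} w(e) = \sum_{e \in B} w(e)$, so minimizing the weight of the union base coincides exactly with minimizing the total weight of the $r$ disjoint bases.

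The technical crux is the polynomial-time independence oracle for $\mathcal M^{(r)}$ and the correctness of the greedy step that relies on it: the oracle is supplied by the matroid partition algorithm, whose analysis (termination after polynomially many augmentations and correctness of the exchange-digraph search) is the only nontrivial ingredient, while everything else follows directly from the matroid axioms. A secondary point to verify is that a minimum-weight base of the union necessarily splits into $r$ \emph{bases} rather than arbitrary independent sets, which is guaranteed precisely by the rank identity $\rho^{(r)}(E) = r\,\rho(E)$ established in the decision step.
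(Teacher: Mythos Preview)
The paper does not provide its own proof of this theorem: it is stated as a known result attributed to Nash--Williams and used as a black box. Your argument via the matroid union theorem is correct and is essentially the standard proof of this fact; the reduction to finding a minimum-weight base of the $r$-fold union matroid, together with the matroid partition algorithm as a polynomial-time independence oracle, is exactly how one establishes polynomial-time solvability for both the decision and the weighted optimization versions.
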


Our objective is to maximize $\sumham(E(T_1), \ldots, E(T_r))$, where $T_i$ is a spanning tree of $G$ for $1 \le i \le r$.
This can be rewritten as:
\[
    \sum_{1 \le i < j \le r} (|E(T_i)| + |E(T_j)| - 2 |E(T_i) \cap E(T_j)|).
\]
As $|E(T_1)| = \cdots = |E(T_r)| = |V| - 1$, the problem is equivalent to that of minimizing the pairwise sum of $|E(T_i) \cap E(T_j)|$, that is, finding a (not necessity edge-disjoint) $r$ spanning trees $T_1, \ldots, T_r$ minimizing
\begin{equation}\label{eq:tree:obj}
    \sum_{1 \le i < j \le r} |E(T_i) \cap E(T_j)|.
\end{equation}

To solve this minimization problem, we reduce it to the problem of finding $r$ disjoint bases of a graphic matroid with minimum total weight, which can be solved in polynomial time (\Cref{thm:tree:packing}).
For each edge $e$ in $G$, we replace it with $r$ parallel edges $e_1, \ldots, e_r$, and the obtained multigraph is denoted by $G' = (V, E')$.
The weight of edges in $G'$ is defined as follows.
For each $e \in E$ and $1 \le i \le r$, the weight of $e_i$ is defined as $w(e_i) = i - 1$.
Clearly, the construction of $G'$ can be done in polynomial time.

\begin{lemma}\label{lem:tree:reduction}
    Let $k$ be an integer.
    Then, $G$ has $r$ spanning trees $T_1, \ldots, T_r$ with $\sum_{1 \le i < j \le r} |E(T_i) \cap E(T_j)| \le k$ if and only if there is a set of disjoint $r$ spanning trees in $G'$ whose total weight is at most $k$.
\end{lemma}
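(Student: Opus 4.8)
The plan is to prove both directions of the equivalence in \Cref{lem:tree:reduction} by setting up a correspondence between an $r$-tuple of spanning trees in $G$ and a single set of $r$ disjoint spanning trees in $G'$, and then checking that the objective \eqref{eq:tree:obj} is captured exactly by the total weight of the disjoint spanning trees under the weight function $w(e_i) = i-1$.

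First I would fix some convenient bookkeeping. Given $r$ spanning trees $T_1, \ldots, T_r$ of $G$, for each edge $e \in E$ let $m(e) = |\{i : e \in E(T_i)\}|$ be its multiplicity across the trees. The key algebraic identity is that the pairwise intersection sum decomposes edgewise: $\sum_{1 \le i < j \le r} |E(T_i) \cap E(T_j)| = \sum_{e \in E} \binom{m(e)}{2}$, since $e$ contributes to the pair $(i,j)$ exactly when $e \in E(T_i) \cap E(T_j)$. On the $G'$ side, a set of $r$ disjoint spanning trees $S_1, \ldots, S_r$ assigns to each original edge $e$ some subset of its $r$ parallel copies; if $e$ is covered $m(e)$ times, the disjointness forces these copies to be distinct, so the cheapest way to cover $e$ exactly $m(e)$ times is to use the copies $e_1, \ldots, e_{m(e)}$ of weights $0, 1, \ldots, m(e)-1$, contributing $\binom{m(e)}{2}$ to the total weight, and this is the unique minimizer. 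So a weight-minimizing choice of disjoint spanning trees in $G'$ has total weight $\sum_e \binom{m(e)}{2}$ where $m$ ranges over coverage vectors realizable by $r$ spanning trees of $G$ — which is exactly \eqref{eq:tree:obj}.

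For the forward direction, given $T_1, \ldots, T_r$ with intersection sum at most $k$, I would build disjoint spanning trees in $G'$ by processing each edge $e$: among the $m(e)$ trees containing $e$, assign to the $j$-th such tree (in index order) the parallel copy $e_j$. Since each tree $T_i$ gets at most one copy of each edge $e$, each resulting edge set is a spanning tree of $G'$ (its underlying edge set in $G$ is $E(T_i)$), and the copies are pairwise disjoint by construction. The total weight is $\sum_e \sum_{j=1}^{m(e)} (j-1) = \sum_e \binom{m(e)}{2} = \sum_{i<j} |E(T_i) \cap E(T_j)| \le k$. Conversely, given disjoint spanning trees $S_1, \ldots, S_r$ in $G'$ of total weight at most $k$, let $T_i$ be the image of $S_i$ in $G$ (contract parallel copies back to the original edge); each $T_i$ is a spanning tree of $G$ since $S_i$ spans $V$ and has $|V|-1$ edges, no two of which are parallel copies of the same $e$ (else $S_i$ would contain a cycle). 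For each $e$, the copies used across all $S_i$ are distinct, so their weights are $\ge 0, 1, \ldots, m(e)-1$ in some order, giving weight $\ge \binom{m(e)}{2}$; summing, $\sum_{i<j} |E(T_i) \cap E(T_j)| = \sum_e \binom{m(e)}{2} \le \sum_e w(S_i\text{-copies of }e) = k' \le k$.

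The main obstacle, such as it is, is making the second (converse) direction airtight: one must argue that contracting parallel copies really yields a spanning tree of $G$ — i.e., that $S_i$ cannot use two parallel copies of the same edge. This follows because two parallel copies $e_a, e_b$ together with the path between the endpoints of $e$ in $G[S_i]$ would form a cycle (or directly, $e_a$ and $e_b$ form a cycle of length two), contradicting that $S_i$ is a forest; hence the $|V|-1$ edges of $S_i$ map to $|V|-1$ distinct edges of $G$ spanning $V$, which is a spanning tree. Everything else is the edgewise counting identity above, which I would state as a short displayed computation rather than belabor.
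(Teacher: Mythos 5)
Your proposal is correct and follows essentially the same route as the paper: the same edgewise identity $\sum_{1\le i<j\le r}|E(T_i)\cap E(T_j)|=\sum_{e\in E}\binom{m(e)}{2}$ and the same assignment of the cheapest $m(e)$ parallel copies in the forward direction. The only (minor, and arguably cleaner) difference is in the converse: you bound the weight of an \emph{arbitrary} family of disjoint spanning trees from below by $\sum_e\binom{m(e)}{2}$ using distinctness of the copies, whereas the paper restricts attention to a weight-minimizing family and uses an exchange argument to show it uses the prefix $e_1,\ldots,e_{m(e)}$; you also make explicit the (correct) observation that no $S_i$ can contain two parallel copies, which the paper leaves implicit.
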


\begin{proof}
    Suppose that $G$ has $r$ spanning trees $T_1, \ldots, T_r$ with $\sum_{1 \le i < j \le r} |E(T_i) \cap E(T_j)| \le k$.
    For each $e \in E$, we let $m(e) = |\{i \in [r] : e \in E(T_i)\}|$.
    Let $F = \{e_1, e_2, \ldots, e_{m(e)} : e \in E\}$.
    Observe that $F$ can be partitioned into $r$ disjoint spanning trees $T'_1, \ldots, T'_r$ of $G'$: For each $1 \le i \le r$ and $e \in E(T_i)$, $T'_i$ contains one of $e_1, \ldots, e_{m(e)}$ that is not contained in any other $T'_j$.
    This can be done by the definition of $m(e)$.
    Moreover, the total weight of $F$ is
    \begin{equation*}
        \sum_{e \in E} \sum_{1 \le i \le m(e)} w(e_i)  = \sum_{e \in E} \sum_{1 \le i \le m(e)} (i-1) = \sum_{e \in E} \binom{m(e)}{2}.
    \end{equation*}
    Since $e$ is contained in $m(e)$ trees of $T_1, \ldots, T_r$, the contribution of $e$ to $\sum_{1\le i < j \le r}|E(T_i) \cap E(T_j)|$ is exactly $\binom{m(e)}{2}$.
    Therefore, the total weight of $r$ disjoint spanning trees is at most $k$.
    
    Conversely, let $T'_1, \ldots, T'_r$ be $r$ disjoint spanning trees of $G'$ minimizing the total weight.
    Note that as $G$ is connected and for each pair of adjacent vertices in $G'$, there are $r$ parallel edges between them, we can always find these disjoint spanning trees from $G'$. 
     Let $F = \bigcup_{1 \le i \le r} E(T'_i)$ and let $k = \sum_{e \in F} w(e)$.
    For each $e \in E$, observe that either $F$ does not contains any of $e_1, e_2, \ldots, e_r$ or contains $e_1, e_2, \ldots, e_{j}$ for some $1 \le j \le r$.
    This follows from the fact that if $e_j \notin F$ and $e_{j'} \in F$ for some $j < j'$, we can exchange $e_{j'}$ with $e_j$, strictly decreasing the total weight of $F$.
    For each $e \in E$, we let $m(e) = |\{e_1, \ldots, e_r\} \cap F|$.
    Then, the total weight of $F$ is $\sum_{e \in E} \binom{m(e)}{2}$.
    Let $T_1, \ldots T_r$ be the spanning trees of $G$ such that $e \in E$ is contained in $T_i$ if and only if some edge $e_j$ copied from $e$ is contained in $T'_i$. 
    Since $e$ is contained in $m(e)$ spanning trees among $T_1, \ldots, T_r$, it contributes $\binom{m(e)}{2}$ to the objective function (\ref{eq:tree:obj}).
    Therefore, we have $\sum_{1 \le i < j \le r} |E(T_i) \cap E(T_j)| = k$.
\end{proof}

Now, we are ready to describe our algorithm.
Given a graph $G$ and an integer $r$, we compute the multigraph $G'$ and the edge weight function $w$ defined as above.
Let $E'$ be the set of edges of $G'$ and $\mathcal I = \{F \subseteq E' : G'[F] \text{ has no cycles}\}$.
Then, $(E', \mathcal I)$ is a graphic matroid.
By \Cref{thm:tree:packing}, we can compute $r$ disjoint spanning trees in $G'$, whose total weight is minimized, in polynomial time.
By \Cref{lem:tree:reduction}, we can find $r$ spanning trees $T_1, \ldots, T_r$ in $G'$ maximizing $\sumham(E(T_1), \ldots, E(T_r))$ in polynomial time, which completes the proof of \Cref{thm:diverse-tree}.

The above argument can be extended to an arbitrary matroid equipped with an independent oracle that can be evaluated in polynomial time.
To see this, we need the following lemma.
\begin{lemma}\label{lem:matroid:extension}
    Let $(E, \mathcal I)$ be a matroid, $e \in E$, and $e' \notin E$.
    Define $\mathcal I' = \mathcal I \cup \{(F \setminus \{e\}) \cup \{e'\} : F \in \mathcal I, e \in F\}$.
    Then, the pair $(E \cup \{e'\}, \mathcal I')$ is a matroid.
    Moreover, if the membership of $\mathcal I$ can be decided in polynomial time, then so is $\mathcal I'$.
\end{lemma}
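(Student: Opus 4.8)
The plan is to verify the three matroid axioms for $(E \cup \{e'\}, \mathcal I')$ directly, with the help of a "relabeling" map $\pi$, and then to observe that an independence oracle for $\mathcal I$ transfers through $\pi$. Here $\pi$ sends a subset $Z \subseteq E \cup \{e'\}$ to $Z$ itself if $e' \notin Z$ and to $(Z \setminus \{e'\}) \cup \{e\}$ if $e' \in Z$; informally $\pi$ renames the new copy $e'$ back to $e$. (Conceptually, $\mathcal I'$ is the family of independent sets of the parallel extension of the matroid at $e$, but I would argue from scratch.) First I would record two easy facts: no member of $\mathcal I'$ contains both $e$ and $e'$, since sets in $\mathcal I$ avoid $e'$ while sets of the form $(F \setminus \{e\}) \cup \{e'\}$ avoid $e$; and $|\pi(Z)| = |Z|$ for every $Z$. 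Then I would establish the characterization
\[
    \mathcal I' = \bigl\{\, Z \subseteq E \cup \{e'\} \;:\; \{e, e'\} \not\subseteq Z \text{ and } \pi(Z) \in \mathcal I \,\bigr\},
\]
which is a short unwinding of the definition of $\mathcal I'$ in both directions: when $e' \in Z$, the only candidate witnessing $Z = (F \setminus \{e\}) \cup \{e'\} \in \mathcal I'$ is $F = \pi(Z)$.

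Given this characterization, (M1) is immediate because $\pi(\emptyset) = \emptyset \in \mathcal I$, and (M2) follows because restricting to a subset $X \subseteq Z$ cannot create the forbidden pair $\{e, e'\}$ and one checks $\pi(X) \subseteq \pi(Z)$ in each of the three relevant cases, so downward closure of $\mathcal I$ yields $\pi(X) \in \mathcal I$. The substance is (M3). Given $X, Y \in \mathcal I'$ with $|X| < |Y|$, I would apply the exchange axiom of the original matroid to $\pi(X), \pi(Y) \in \mathcal I$ (which have the same cardinalities as $X, Y$) to obtain $g \in \pi(Y) \setminus \pi(X)$ with $\pi(X) \cup \{g\} \in \mathcal I$, and then translate back. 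If $g \neq e$, then $g \neq e'$ as well (since $\pi(Y) \subseteq E$), so $g \in Y \setminus X$, and $\pi(X \cup \{g\}) = \pi(X) \cup \{g\} \in \mathcal I$ while $X \cup \{g\}$ still avoids $\{e, e'\}$; hence $X \cup \{g\} \in \mathcal I'$. If $g = e$, then $e \notin \pi(X)$ forces $e, e' \notin X$, so $\pi(X) = X$ and $X \cup \{e\} \in \mathcal I$; moreover $e \in \pi(Y)$ means either $e \in Y$, in which case $f = e$ works, or $e' \in Y$, in which case $f = e'$ works since $\pi(X \cup \{e'\}) = X \cup \{e\} \in \mathcal I$ and $X \cup \{e'\}$ avoids $\{e,e'\}$.

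I expect the only genuine obstacle — and it is bookkeeping rather than depth — to be precisely this final dichotomy in (M3): the element returned by the exchange axiom of the original matroid may be $e$ even though $Y$ "uses" the parallel copy $e'$ in place of $e$, so one must be prepared to add $e'$ rather than $e$ to $X$. Keeping straight which (at most one) of $e$ and $e'$ lies in each of $X$ and $Y$ is the delicate part, and the characterization displayed above is designed to make that transparent. Finally, for the oracle claim: to decide whether $Z \in \mathcal I'$, first test in constant time whether $\{e, e'\} \subseteq Z$ (and reject if so), and otherwise return the result of a single $\mathcal I$-oracle call on $\pi(Z)$, which is computable from $Z$ in linear time; hence $\mathcal I'$ admits a polynomial-time independence oracle whenever $\mathcal I$ does.
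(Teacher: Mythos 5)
Your proof is correct, but it takes a more self-contained route than the paper. The paper's own proof of this lemma is essentially a citation: it asserts that $(E \cup \{e'\}, \mathcal I')$ is a matroid as a ``well known'' fact (this is the parallel extension of $\mathcal M$ at $e$, i.e., adding $e'$ as an element parallel to $e$) and only spells out the membership test, which is identical to yours (reject if $\{e, e'\} \subseteq F$, otherwise query the original oracle on $F$ or on $(F \setminus \{e'\}) \cup \{e\}$). You instead verify (M1)--(M3) from scratch via the relabeling map $\pi$, and your case analysis is sound: the characterization $\mathcal I' = \{Z : \{e,e'\} \not\subseteq Z,\ \pi(Z) \in \mathcal I\}$ is right (the second part of the defining union forces $e' \in Z$ and $e \notin Z$, with $F = \pi(Z)$ the unique witness), downward closure follows from $\pi(X) \subseteq \pi(Z)$, and the exchange step correctly handles the one delicate case where the exchanged element is $e$ but $Y$ contains the copy $e'$, in which case $e'$ is the element to add. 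What your approach buys is independence from the cited literature and an explicit, checkable argument; what the paper's approach buys is brevity, at the cost of deferring the matroid axioms to a reference.
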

\begin{proof}
    It is well known that $(E \cup \{e'\}, \mathcal I')$ is a matroid~\cite{Nagamochi:Complexity:1997}.
    
    To check the membership of $\mathcal I'$ for given $F \subseteq E \cup \{e'\}$, we test whether $\{e, e'\} \subseteq F$, $F \in \mathcal I$ for the case $e' \notin F$, and $(F \setminus \{e'\}) \cup \{e\} \in \mathcal I$ for the case $e' \in F$.
    These can be done in polynomial time.
\end{proof}

To find $r$ bases $B_1, \ldots, B_r$ of a matroid $\mathcal M = (E, \mathcal I)$ maximizing $\sumham(B_1, \ldots, B_r)$,
we perform a polynomial-time reduction to the problem of finding minimum weight $r$ disjoint bases of a matroid and solve it by the algorithm in \Cref{thm:tree:packing} as in \Cref{lem:tree:reduction}.
For each element $e \in E$, we let $e_1, \ldots, e_r$ be $r$ copies of $e$, and let $E' = \{e_i : e \in E, 1 \le i \le r\}$.
Since the size of bases of $\mathcal M$ is its rank, denoted by $r(\mathcal M)$, and
\[
    \sumham(B_1, \ldots, B_r) = \sum_{1 \le i < j \le r} 2(r(\mathcal M) - |B_i \cap B_j|),
\]
the objective is to minimize $\sum_{1 \le i < j \le r} |B_i \cap B_j|$.
Define $\mathcal I' \subseteq 2^{E'}$ as follows.
For $F \subseteq E$, we let $\mathcal C_F \subseteq 2^{E'}$ be the set of all $F' \subseteq E'$ such that $F'$ contains exactly one of $e_1, \ldots, e_r$ for each $e \in F$.
Then, $\mathcal I' = \bigcup_{F \in \mathcal I}\mathcal C_F$.
By~\Cref{lem:matroid:extension}, $\mathcal M' = (E', \mathcal I')$ is a matroid.
By weighting each element of $E'$ as $w(e_i) = i - 1$ for each $e \in F$ and $1 \le i \le r$, similarly to \Cref{lem:tree:reduction}, we can prove that $\mathcal M$ has $r$ bases $B_1, \ldots, B_r$ such that $\sum_{1 \le i < j \le r} |B_i \cap B_j|\le k$ if and only if $\mathcal M'$ had $r$ disjoint bases of total weight at most $k$, which can be found in polynomial time.

\begin{theorem}\label{thm:matroid:ptime}
    Let $\mathcal M = (E, \mathcal I)$ be a matroid. Suppose that the membership of $\mathcal I$ can be checked in polynomial time.
    Then, we can find $r$ bases $B_1, \ldots, B_r$ of $\mathcal M$ maximizing $\sumham(B_1, \ldots, B_r)$ in polynomial time.
\end{theorem}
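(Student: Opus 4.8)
The plan is to mimic verbatim the proof of \Cref{thm:diverse-tree}, with the graphic matroid replaced by the abstract matroid $\mathcal M$ and the parallel copies of an edge replaced by formal copies of a ground-set element. First I would record the algebraic reduction of the objective: every base of $\mathcal M$ has size $r(\mathcal M)$, so $\sumham(B_1,\dots,B_r)=\sum_{1\le i<j\le r}2(r(\mathcal M)-|B_i\cap B_j|)$, and maximizing $\sumham$ is the same as minimizing $\sum_{1\le i<j\le r}|B_i\cap B_j|$.

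Next I would build the auxiliary matroid. For each $e\in E$ introduce $r$ fresh copies $e_1,\dots,e_r$, set $E'=\{e_i:e\in E,\ 1\le i\le r\}$, and obtain $\mathcal M'=(E',\mathcal I')$ by starting from $\mathcal M$ and applying \Cref{lem:matroid:extension} once for each of the $(r-1)|E|$ new copies. That lemma guarantees that $\mathcal M'$ is a matroid with a polynomial-time independence oracle; moreover each copy operation preserves the rank, so $r(\mathcal M')=r(\mathcal M)$, and $\mathcal I'=\bigcup_{F\in\mathcal I}\mathcal C_F$, where $\mathcal C_F$ is the family of subsets of $E'$ that pick exactly one copy of each element of $F$. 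Weight the copies by $w(e_i)=i-1$.

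Then I would prove the analogue of \Cref{lem:tree:reduction}: $\mathcal M$ has bases $B_1,\dots,B_r$ with $\sum_{i<j}|B_i\cap B_j|\le k$ iff $\mathcal M'$ has $r$ pairwise disjoint bases of total weight at most $k$. For the forward direction, set $m(e)=|\{i:e\in B_i\}|$ and lift each $B_i$ to a base of $\mathcal M'$ by assigning, greedily over $e\in B_i$, some copy $e_j$ with $j\le m(e)$ not yet used for another $B_{i'}$ (possible since $e$ lies in exactly $m(e)$ of the $B_i$ and there are $m(e)$ such copies; it is a base because every member of $\mathcal C_{B_i}$ is independent of size $r(\mathcal M)=r(\mathcal M')$). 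Each $e$ then uses the copies $e_1,\dots,e_{m(e)}$, so the total weight is $\sum_e\sum_{1\le j\le m(e)}(j-1)=\sum_e\binom{m(e)}{2}$, which equals $\sum_{i<j}|B_i\cap B_j|$ by double counting. For the converse, take minimum-weight disjoint bases $T'_1,\dots,T'_r$ of $\mathcal M'$ (they exist: for any base $B$ of $\mathcal M$ the sets $\{e_i:e\in B\}$, $i=1,\dots,r$, are disjoint bases of $\mathcal M'$); a copy-swap exchange argument shows that by minimality the copies of each $e$ used form a prefix $e_1,\dots,e_{m(e)}$, and projecting $T'_i$ down to $\mathcal M$ yields bases $B_i$ with $\sum_{i<j}|B_i\cap B_j|=\sum_e\binom{m(e)}{2}$ equal to the total weight. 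Finally, \Cref{thm:tree:packing} applied to $\mathcal M'$ with the weights $w$ computes minimum-weight $r$ disjoint bases in polynomial time, and the reduction turns them into the sought bases of $\mathcal M$.

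The only genuinely nontrivial point is the converse direction: one must check that the ``use a prefix of the copies'' normalization survives the exchange argument in an abstract matroid (swapping $e_{j'}\in T'_i$ for an unused $e_j$ with $j<j'$ keeps $T'_i$ independent because copies of the same element are parallel in $\mathcal M'$) and that the projected sets are genuinely bases of $\mathcal M$, not merely independent (again from rank preservation via \Cref{lem:matroid:extension}). Everything else is bookkeeping identical to the graphic-matroid case.
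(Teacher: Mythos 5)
Your proposal is correct and follows essentially the same route as the paper: the paper also reduces to minimizing $\sum_{1\le i<j\le r}|B_i\cap B_j|$, builds the parallel-copy matroid $\mathcal M'=(E',\mathcal I')$ with $\mathcal I'=\bigcup_{F\in\mathcal I}\mathcal C_F$ via \Cref{lem:matroid:extension}, assigns weights $w(e_i)=i-1$, and invokes the analogue of \Cref{lem:tree:reduction} together with \Cref{thm:tree:packing}. In fact you supply more detail than the paper (which only says the reduction lemma holds ``similarly''), and your checks of the prefix normalization under parallel exchanges and of rank preservation are exactly the points one needs to verify.
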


Let $\mathcal M = (E, \mathcal I)$ be a matroid and let $k$ be a positive integer.
The {\em $k$-truncation} of $\mathcal M$ is the pair $(E, \mathcal I')$ such that $F \subseteq E$ belongs to $\mathcal I'$ if and only if $|F| \le k$ and $F \in \mathcal I$.
It is known that the $k$-truncation of a matroid is also a matroid~\cite{Oxley:Matroid:2006}.
Hence, by \Cref{thm:matroid:ptime}, we have the following corollary.

\begin{corollary}
    Let $G = (V, E)$ be a graph and $k, r$ positive integers.
    Then, there is a polynomial-time algorithm that finds $r$ forests $F_1, \ldots, F_r$ of $G$ with $|E(F_i)| = k$ for $1 \le i \le r$ such that $\sumham(E(F_1), \ldots, E(F_r))$ is maximized.
\end{corollary}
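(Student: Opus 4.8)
The plan is to reduce the problem directly to \Cref{thm:matroid:ptime} by passing to the $k$-truncation of the graphic matroid of $G$. First I would take the graphic matroid $\mathcal M_G = (E, \mathcal I_G)$, whose independent sets are precisely the edge sets $F$ with $G[F]$ acyclic, i.e., the forests of $G$; its rank equals $|V| - c$, where $c$ is the number of connected components of $G$. Membership in $\mathcal I_G$ is decidable in polynomial time (test acyclicity of $G[F]$). Next I would form the $k$-truncation $\mathcal M' = (E, \mathcal I')$ of $\mathcal M_G$, where $F \in \mathcal I'$ if and only if $|F| \le k$ and $F \in \mathcal I_G$. By the fact cited just above the corollary, $\mathcal M'$ is again a matroid, and its independence oracle is obtained from that of $\mathcal M_G$ by an additional cardinality check, so it is still evaluable in polynomial time.

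Second, I would observe that, assuming $k \le |V| - c$ (otherwise no forest of $G$ has $k$ edges, and the algorithm simply reports infeasibility), every base of $\mathcal M'$ is an edge set of size exactly $k$ inducing an acyclic subgraph --- that is, a forest of $G$ with exactly $k$ edges --- and conversely every such forest is a base of $\mathcal M'$. This is immediate from the exchange axiom (M3) applied inside $\mathcal I_G$: any forest with fewer than $k$ edges can be extended by an edge while staying in $\mathcal I_G$, hence in $\mathcal I'$. Applying \Cref{thm:matroid:ptime} to $\mathcal M'$ then produces, in polynomial time, $r$ bases $B_1, \ldots, B_r$ of $\mathcal M'$ maximizing $\sumham(B_1, \ldots, B_r)$; taking $F_i = G[B_i]$ for $1 \le i \le r$ gives the required forests, and optimality over bases of $\mathcal M'$ coincides with optimality over $k$-edge forests of $G$.

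I do not anticipate a genuine obstacle: all of the mathematical content is already available --- namely, that truncations of matroids are matroids and \Cref{thm:matroid:ptime}. The only points needing a line of justification are (i) that a maximal independent set of $\mathcal M'$ has size exactly $\min(k, |V| - c)$, which is what lets us identify bases of $\mathcal M'$ with the size-$k$ forests of $G$ in the feasible regime, and (ii) that the polynomial running time of the oracle for $\mathcal M'$ is inherited from that of $\mathcal M_G$ together with a cheap cardinality test. Both are routine, so the proof is essentially a one-paragraph invocation of the preceding theorem.
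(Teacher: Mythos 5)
Your proposal is correct and follows exactly the paper's route: pass to the $k$-truncation of the graphic matroid (a matroid by the cited fact, with an efficiently evaluable oracle) and invoke \Cref{thm:matroid:ptime}. The extra details you supply --- identifying the bases of the truncation with the $k$-edge forests via the exchange axiom and handling the infeasible case $k$ exceeding the rank --- are sound and merely make explicit what the paper leaves implicit.
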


To give a precise running time bound, we use Cunningham's algorithm~\cite{Cunningham:Improved:1986} for the matroid intersection, and the total running time of our algorithm is $O((rR)^{1.5}r|E|Q)$, where $R$ is the rank of the matroid $\mathcal M$ and $Q$ is the running time of the independent oracle of $\mathcal M$.
In particular, for {\sc Diverse Spanning Tree}, it runs in time $O((rn)^{2.5}m)$, where $n$ and $m$ are the numbers of vertices and edges of the input graph, respectively.

\section{Finding Diverse $k$-Paths}\label{sec:k-path}
In the previous section, we present an efficient algorithm for finding diverse spanning trees.
A natural variant of this problem is to find diverse spanning paths.
However, the problem is clearly hard since the problem of finding a single spanning path, namely the Hamiltonian path problem, is NP-hard.
To cope with this difficulty, we investigate the complexity of this problem from the perspective of fixed-parameter tractability.
In this context, given a graph $G$ and integers $k$ and $r$, we are asked to find a set of $r$ paths of length $k - 1$ maximizing diversity measures.
Note the length of a path is defined to be the number of edges in the path.
In this section, we present an FPT algorithm for this problem.

\subsection{Finding a Single $k$-Path}
We first quickly review the algorithm of \cite{Alon:Color:1995} for finding a $k$-path (i.e., a path of length $k - 1$) in a graph $G = (V, E)$, which is widely known as the {\em color-coding} technique.
The algorithm basically consists of two steps: Assign one of $k$ colors to each vertex independently and uniformly at random and then seek a colorful $k$-path, that is, a path of length $k - 1$ whose vertices are colored with distinct colors.
By repeating these two steps sufficiently many times, we can decide whether $G$ has a $k$-path with high probability.

Suppose that $G$ has at least one $k$-path $P$.
We say that a coloring $c: V \to [k]$ is {\em good for $P$} if $P$ is $[k]$-colorful with respect to $c$.
Since the color of each vertex is determined independently and uniformly at random, the probability of a good coloring for $P$ is at least $k!/k^k = \Omega(e^{-k})$.
In vertex-colored graphs, we can find a colorful $k$-path by the following dynamic programming algorithm.
Let $c: V \to [k]$ be a coloring on vertices.
For each $v \in V$ and $C \subseteq [k]$, we say that a path $P$ is {\em feasible for $(C, v)$} if $P$ is $C$-colorful under $c$ and ends at $v$.
Given this, we wish to compute a feasible path for $([k], v)$ for some $v \in V$.
We can compute such a path by the following recurrence: there is a feasible path $P$ for $(C, v)$ of length at least one if and only if there is a feasible path for $(C \setminus \{c(v)\}, u)$ for some $u \in N_G(v)$.
This yields an algorithm for finding a colorful $k$-path that runs in time $O(2^{k}k(|V| + |E|))$, which is described in Algorithm~\ref{alg:k-path}.

\begin{algorithm}[t]                   
\caption{Given a vertex-colored graph $G = (V, E)$ with $c: V \to [k]$ and an integer $k$, find a colorful $k$-path in $G$}\label{alg:k-path}
\begin{algorithmic}[1]
    \State Set $\dpf(C, v) := $ {\bf False} for all $C \subseteq [k]$ and $v \in V$
    \For {$v \in V$}
        \State $\dpf(\{c(v)\}, v) := $ {\bf True}  
    \EndFor
    \For {$C \subseteq [k]$ in an increasing order of $|C|$}
        \For {$v \in V$ with $c(v) \in C$}
            \State $\dpf(C, v) := \displaystyle\bigvee_{u \in N_G(v)} \dpf(C \setminus \{c(v)\}, u)$
        \EndFor
    \EndFor
    \State Answer YES iff $\dpf([k], v) =$ {\bf True} for some $v \in V$
\end{algorithmic}
\end{algorithm}
By repeating the random coloring assignment and the algorithm for finding a colorful $k$-path (\Cref{alg:k-path}), we have the following randomized algorithm.
\begin{theorem}[\cite{Alon:Color:1995}]\label{thm:k-path}
    For every constant $\varepsilon > 0$, there is a Monte Carlo algorithm that decides whether $G$ has a $k$-path and runs in time $O((2e)^k(|V| + |E|))$, which does not contain false positive and returns a $k$-path with probability at least $\varepsilon$ if the answer is affirmative.
\end{theorem}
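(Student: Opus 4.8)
The plan is to combine the deterministic subroutine in \Cref{alg:k-path} with a probabilistic amplification argument, i.e.\ the classical color-coding analysis. First I would verify the correctness of \Cref{alg:k-path}: by induction on $|C|$, the entry $\dpf(C,v)$ is \textbf{True} if and only if $G$ has a path ending at $v$ whose vertex set is $C$-colorful with respect to $c$. The base case $|C|=1$ is immediate from the initialization, and the inductive step is precisely the stated recurrence: a $C$-colorful path ending at $v$ (with $c(v)\in C$) is obtained by extending a $(C\setminus\{c(v)\})$-colorful path ending at some neighbor $u$ by the edge $\{u,v\}$, and conversely. Note that this extension automatically produces a \emph{simple} path, since the color $c(v)$ is not used on the shorter path and hence $v$ cannot lie on it. Consequently $\dpf([k],v)=\textbf{True}$ for some $v$ exactly when $G$ has a $[k]$-colorful path, which necessarily has $k$ distinct vertices and is therefore a $k$-path; the path itself is recovered by the usual back-pointer trace through the table. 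Accounting $O(k)$ for each set operation, this runs in time $O(2^k k(|V|+|E|))$, as claimed in the text preceding the theorem.

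Second, I would analyze a single random trial. Suppose $G$ contains a $k$-path $P$ and fix one such $P$, say on vertices $v_1,\dots,v_k$. For a coloring $c\colon V\to[k]$ drawn uniformly at random, $c$ is good for $P$ (that is, $P$ is $[k]$-colorful) exactly when these $k$ vertices receive pairwise distinct colors, which occurs with probability $k!/k^k$. Using $e^k\ge k^k/k!$ (from the Taylor expansion of $e^k$), this probability is at least $e^{-k}$. When this event occurs, \Cref{alg:k-path} finds a colorful $k$-path in that trial.

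Third, I would amplify: run the two-step procedure (sample $c$, then invoke \Cref{alg:k-path}) independently $N:=\lceil e^k\ln\frac{1}{1-\varepsilon}\rceil$ times, reporting a $k$-path as soon as one is found. If $G$ has no $k$-path, then \Cref{alg:k-path} never reports a colorful $k$-path in any trial, so there are no false positives. If $G$ has a $k$-path $P$, the probability that none of the $N$ colorings is good for $P$ is at most $(1-e^{-k})^N\le e^{-N e^{-k}}\le e^{-\ln\frac{1}{1-\varepsilon}}=1-\varepsilon$, so the algorithm succeeds with probability at least $\varepsilon$. The total running time is $N$ times that of one call, i.e.\ $O(e^k\cdot 2^k k\,(|V|+|E|))$, which matches the stated $O((2e)^k(|V|+|E|))$ bound once the polynomial factor is folded into the implementation of the colorful-$k$-path subroutine as in~\cite{Alon:Color:1995}.

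There is no genuine obstacle here, as this is the textbook color-coding argument; the only points demanding care are getting the dynamic-programming recurrence and its correctness proof exactly right (especially the simplicity of the constructed path), the elementary estimate $k!/k^k\ge e^{-k}$, and choosing the number of repetitions so that the $(2e)^k$ bound comes out cleanly. The improvement of $(2e)^k$ over the naive $k^k$ union bound is exactly what the choice $N=\Theta(e^k)$ buys, and it is this choice that I would highlight as the crux of the running-time analysis.
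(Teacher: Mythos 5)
Your proposal is correct and follows exactly the argument the paper sketches (and attributes to Alon et al.): the $O(2^k k(|V|+|E|))$ dynamic program of \Cref{alg:k-path} for colorful $k$-paths, the $k!/k^k \ge e^{-k}$ bound on the probability of a good coloring, and $\Theta(e^k)$ independent repetitions to amplify. The only deviation is cosmetic — the extra factor of $k$ in the running time, which the paper's own statement also silently drops — and you flag it appropriately.
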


Alon et al. derandomize the algorithm of \Cref{thm:k-path} by substituting random coloring with {\em $k$-perfect hash family} \cite{Alon:Color:1995,Naor:Splitters:1995}.
Let $U$ be a finite set. A family $\mathcal H$ of functions $h: U \to [k]$ is called a {\em $k$-perfect hash family} if for every $k$-element subset $X \subseteq U$, there is a function $h \in \mathcal H$ such that $h(x) \neq h(y)$ for every $x, y \in X$ with $x \neq y$.
For any $U$ and a positive integer $k$, there is a $k$-perfect hash family of size $e^{k}k^{O(\log k)} \log |U|$ and it can be constructed in time $e^{k}k^{O(\log k)} |U| \log |U|$~\cite{Naor:Splitters:1995}.
Instead of using random coloring $c$, we use $k$-perfect hash family, and then we can deterministically decide whether $G$ has a $k$-path in time $(2e)^kk^{O(\log k)}|V|^{O(1)}$~\cite{Alon:Color:1995}.

\subsection{Extending to Diverse $k$-Paths}
Let $G = (V, E)$ be a graph and let $k$ and $r$ be positive integers.
We first present an algorithm for finding $r$ paths $P_1, \ldots, P_r$ of length $k - 1$ maximizing $\minham(V(P_1), \ldots, V(P_r))$ rather than $\minham(E(P_1), \ldots, E(P_r))$ (i.e., {\sc Min-Diverse $k$-Path}) for the reason that this version is conceptually simpler.
The algorithm for {\sc Min-Diverse $k$-Path} is postponed to the next section.
Although we only discuss an algorithm for maximizing $\minham(V(P_1), \ldots, V(P_r))$, the technique can be readily applied to the one for maximizing $\sumham(V(P_1), \ldots, V(P_r))$ as well.

To find a set of diverse $r$ paths of length $k - 1$, we leverage the same idea of \cite{Alon:Color:1995}.
We use $kr$ colors and randomly assign these colors to each vertex of $G$.
Since an optimal solution consists of at most $kr$ vertices (i.e., not necessarily vertex-disjoint $r$ paths of length $k - 1$), the probability of assigning distinct colors to each vertex in the solution is at least $(kr)!/(kr)^{kr} = \Omega(e^{-kr})$.
More specifically, let $\mathcal P = \{P_1, P_2, \ldots, P_r\}$ be a set of (not necessarily vertex-disjoint) $r$ paths of length $k - 1$.
We say that a coloring is {\em good for $\mathcal P$} if each vertex in $\bigcup_{P \in \mathcal P} V(P)$ receives a distinct color.
Suppose that $c : V \to [kr]$ is a good coloring for $\mathcal P$.
Then, each $k$-path in $\mathcal P$ is obviously colorful in this coloring.
Thus, we run Algorithm~\ref{alg:k-path} and compute the entry of $\dpf(C, v)$ for each $C \subseteq [kr]$ with $|C| \le k$ and $v \in V$, which is the existence of $C$-colorful paths in the colored graph.
Clearly, this can be done in time $\binom{kr}{k}n^{O(1)}$.

The vertex coloring not only gives a way to find a $k$-path in FPT time but also allows us to maximize the diversity of $k$-paths by enumerating sets of colors.
Fix a vertex-coloring $c: V \to [kr]$. 
We say that a set of colors $C \subseteq [kr]$ with $|C| = k$ is {\em feasible (under $c$)} if there is a $C$-colorful path in $G$.
The following observation is straightforward but is the heart of our algorithm.

\begin{observation}\label{obs:k-path:lb}
    Let $c: V \to [kr]$ be a coloring of $V$ and let $C_1, C_2, \ldots, C_r \subseteq [kr]$ be sets of colors with $|C_i| = k$ for $1 \le i \le r$.
    Suppose that $C_i$ is feasible for all $1 \le i \le r$.
    Then, there are (not necessarily vertex-disjoint and even not necessarily distinct) $r$ paths $P_1, \ldots, P_r$ of length $k - 1$ such that $\minham(V(P_1), \ldots, V(P_r)) \le \minham(C_1, \ldots, C_r)$.
\end{observation}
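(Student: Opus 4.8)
The plan is to establish the stated inequality $\minham(V(P_1),\ldots,V(P_r)) \le \minham(C_1,\ldots,C_r)$ by an explicit construction, making essential use of the freedom---signalled by the parenthetical ``not necessarily distinct''---to let the chosen paths coincide. First I would record that the hypotheses guarantee a path exists at all: since $C_1$ is feasible, there is a $C_1$-colorful path $P$ in $G$, and because $|C_1| = k$ this $P$ has exactly $k$ vertices, hence is a path of length $k-1$. Fix one such $P$.

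The step I expect to be the main obstacle is that the \emph{natural} construction points the inequality the wrong way, so it must be deliberately avoided. If one were to take, for each $i$, a $C_i$-colorful path $P_i$ (which exists because $C_i$ is feasible), then any vertex lying in both $V(P_i)$ and $V(P_j)$ must carry a color of $C_i \cap C_j$, and, since each $P_i$ is colorful, distinct common vertices carry distinct colors; thus $|V(P_i) \cap V(P_j)| \le |C_i \cap C_j|$ for every pair. This rearranges to $|V(P_i)\setminus V(P_j)| + |V(P_j)\setminus V(P_i)| \ge |C_i \setminus C_j| + |C_j \setminus C_i|$ and yields $\minham(V(P_1),\ldots,V(P_r)) \ge \minham(C_1,\ldots,C_r)$, which is the reverse of what is claimed. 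Hence colorful paths cannot be used to obtain the desired bound, and the content of the statement lies precisely in allowing the paths to repeat.

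To finish, I would exploit this repetition directly: set $P_1 = P_2 = \cdots = P_r = P$. Then $V(P_i) = V(P_j)$ for all $i,j$, so every pairwise term $|V(P_i)\setminus V(P_j)| + |V(P_j)\setminus V(P_i)|$ equals $0$, and therefore $\minham(V(P_1),\ldots,V(P_r)) = 0$. On the other hand, each term $|C_i \setminus C_j| + |C_j \setminus C_i|$ is nonnegative, so $\minham(C_1,\ldots,C_r) \ge 0$. Combining the two gives $\minham(V(P_1),\ldots,V(P_r)) = 0 \le \minham(C_1,\ldots,C_r)$, which is exactly the asserted inequality, and the constructed $P_1,\ldots,P_r$ are $r$ paths of length $k-1$ as required.

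The argument is elementary throughout; the only genuine subtlety is recognising that the tempting colorful choice yields the opposite inequality, and that the hypothesis ``not necessarily vertex-disjoint and even not necessarily distinct'' is what licenses the collapsing construction that produces the stated $\le$. I would make sure to state explicitly that feasibility of even a single $C_i$ suffices to produce the common path $P$, so that the construction is always available under the hypotheses.
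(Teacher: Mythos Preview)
Your proof is technically correct for the inequality as literally written, but what you have proved is a triviality that the paper cannot use. The direction $\le$ in the observation is a typo: the very next sentence of the paper uses the observation to conclude $OPT \ge \minham(C_1,\ldots,C_r)$, where $OPT$ is the \emph{maximum} of $\minham(V(P_1),\ldots,V(P_r))$ over all tuples of $r$ paths of length $k-1$. That conclusion only follows if the observation produces paths with $\minham(V(P_1),\ldots,V(P_r)) \ge \minham(C_1,\ldots,C_r)$. Your collapsing construction $P_1=\cdots=P_r$ yields $0 \le \minham(C_1,\ldots,C_r)$, which is vacuous and cannot support the algorithm that the paper builds on this observation.

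The intended argument is exactly the one you worked out in your middle paragraph and then discarded as ``pointing the wrong way'': pick for each $i$ any $C_i$-colorful path $P_i$; since the vertices of $P_i$ receive pairwise distinct colours from $C_i$, every vertex of $V(P_i)\cap V(P_j)$ carries a colour in $C_i\cap C_j$, whence $|V(P_i)\cap V(P_j)| \le |C_i\cap C_j|$ and therefore $|V(P_i)\setminus V(P_j)| + |V(P_j)\setminus V(P_i)| \ge |C_i\setminus C_j| + |C_j\setminus C_i|$ for every pair. This gives $\minham(V(P_1),\ldots,V(P_r)) \ge \minham(C_1,\ldots,C_r)$, which is what the paper means by ``straightforward'' and why it calls the observation the ``heart'' of the algorithm. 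In short, your second paragraph already contains the correct proof; the misstep is taking the printed $\le$ at face value rather than recognising it as a misprint from the surrounding context.
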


This implies that for every vertex-coloring $c: V \to [kr]$ and for feasible sets $C_1, \ldots, C_r \subseteq [kr]$ with respect to $c$, we have $OPT \ge \minham(C_1, \ldots, C_r)$, where $OPT$ is the optimal diversity of $r$ paths of length $k - 1$, and equality holds if $c$ is good for an optimal solution $\mathcal P = \{P_1, \ldots, P_r\}$ and $P_i$ is $C_i$-colorful for all $1 \le i \le r$.
With probability at least $e^{-kr}$, a random coloring is good for $\mathcal P$.
For any vertex-coloring $c$, we can check the feasibility of all color sets $C \subseteq [kr]$ with $|C| = k$ in total time $O(\binom{kr}{k}k(|V| +  |E|))$.
Therefore, we can compute $OPT$ in time $O(\binom{kr}{k}k(|V| + |E|) + \binom{kr}{k}^r(kr)^{O(1)})$ by simply enumerating all $r$-tuples of feasible color sets, assuming $c$ is good for $\mathcal P$.
Overall, the total running time is
% \begin{align*}
    $O\left(e^{kr}\left(\binom{kr}{k}k(|V| + |E|) + \binom{kr}{k}^r(kr)^{O(1)}\right)\right)$,
% \end{align*}
which is $2^{O(kr \log (kr))}(|V| + |E|)$.

\begin{theorem}\label{thm:diverse-k-path}
    For every constant $\varepsilon > 0$, there is a Monte Carlo algorithm that finds $r$ paths of length $k - 1$ in $G$ maximizing $\minham(V(P_1), \ldots, V(P_r))$ or concludes that $G$ has no $k$-paths in time $2^{O(kr \log (kr))}(|V| + |E|)$.
    Moreover, it does not contain false positive and returns an optimal solution with probability at least $\varepsilon$ if $G$ has at least one $k$-path.
\end{theorem}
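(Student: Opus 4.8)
The plan is to turn the color-coding scheme built around \Cref{obs:k-path:lb} into a formal Monte Carlo algorithm and then bound its error and running time. I would set $N := \bigl\lceil e^{kr}\ln\frac{1}{1-\varepsilon}\bigr\rceil$ and run $N$ independent rounds. In each round I sample a coloring $c:V\to[kr]$ uniformly at random, run \Cref{alg:k-path} with color universe $[kr]$ so that $\dpf(C,v)$ is computed for every $C\subseteq[kr]$ with $|C|\le k$ and every $v\in V$, and thereby extract the family $\mathcal F$ of \emph{feasible} color sets of size exactly $k$; by the usual backtracking over the recurrence in \Cref{alg:k-path} I also recover, for each $C\in\mathcal F$, one $C$-colorful $k$-path $P_C$. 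I then enumerate all $r$-tuples $(C_1,\dots,C_r)\in\mathcal F^{\,r}$, evaluate $\minham(C_1,\dots,C_r)$, and keep, across all rounds, the tuple and the associated paths $P_{C_1},\dots,P_{C_r}$ maximizing this value. If no round produces a nonempty $\mathcal F$, the algorithm reports that $G$ has no $k$-path; otherwise it outputs the best recorded $r$ paths. Here $OPT$ denotes the maximum of $\minham(V(P_1),\dots,V(P_r))$ over all $r$-tuples of length-$(k-1)$ paths in $G$.

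I would then establish three claims. \emph{Soundness / no false positive:} each $C\in\mathcal F$ is witnessed by a genuine $C$-colorful $k$-path in $G$, so the algorithm declares ``no $k$-path'' only when no round ever found a colorful path, in which case $G$ really has none; and every tuple it outputs consists of $r$ actual paths of length $k-1$ in $G$. \emph{Upper bound:} \Cref{obs:k-path:lb} applied in any round to any feasible tuple yields $OPT\ge\minham(C_1,\dots,C_r)$, so the value kept by the algorithm never exceeds $OPT$, and likewise the reconstructed output paths always have $\minham\le OPT$. \emph{Lower bound with probability $\varepsilon$:} fix an optimal solution $\mathcal P^\star=\{P_1^\star,\dots,P_r^\star\}$, which spans at most $kr$ vertices, so a random coloring is good for $\mathcal P^\star$ (injective on $\bigcup_i V(P_i^\star)$) with probability at least $(kr)!/(kr)^{kr}\ge e^{-kr}$; when that happens, $C_i:=c(V(P_i^\star))$ has size $k$ and is feasible, and injectivity of $c$ on $\bigcup_i V(P_i^\star)$ forces $|V(P_i^\star)\setminus V(P_j^\star)|+|V(P_j^\star)\setminus V(P_i^\star)| = |C_i\setminus C_j|+|C_j\setminus C_i|$ for all $i,j$, hence $\minham(C_1,\dots,C_r)=OPT$ and this value is recorded in that round; moreover the paths $P_{C_1},\dots,P_{C_r}$ reconstructed from that tuple have $\minham\ge\minham(C_1,\dots,C_r)=OPT$ (a vertex whose color lies outside the other color set cannot be shared), hence $=OPT$, so they are optimal. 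Over $N$ rounds the probability that some round is good for $\mathcal P^\star$ is at least $1-(1-e^{-kr})^N\ge 1-e^{-Ne^{-kr}}\ge\varepsilon$.

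For the running time, one round costs $O\!\bigl(\binom{kr}{k}\,k\,(|V|+|E|)\bigr)$ for the dynamic program together with the path reconstructions, and $O\!\bigl(\binom{kr}{k}^{\,r}(kr)^{O(1)}\bigr)$ for enumerating the $r$-tuples and evaluating $\minham$, so the total over $N=O(e^{kr})$ rounds is $O\!\left(e^{kr}\bigl(\binom{kr}{k}\,k\,(|V|+|E|)+\binom{kr}{k}^{\,r}(kr)^{O(1)}\bigr)\right)$. Since $\binom{kr}{k}\le(er)^k=2^{O(k\log(kr))}$, we get $\binom{kr}{k}^{\,r}=2^{O(kr\log(kr))}$, and as $e^{kr}$, $k$ and $(kr)^{O(1)}$ are each $2^{O(kr\log(kr))}$, the whole bound collapses to $2^{O(kr\log(kr))}(|V|+|E|)$.

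The only ingredient beyond a plain $k$-path search is \Cref{obs:k-path:lb} and the matching equality in the good-coloring case, so the step I expect to need the most care is verifying that a coloring good for an optimal $\mathcal P^\star$ transports Hamming distances between vertex sets \emph{exactly} onto Hamming distances between color sets — this uses injectivity of $c$ on $\bigcup_i V(P_i^\star)$ in both directions — and, relatedly, confirming that the paths reconstructed from the table for the optimal color tuple attain $OPT$ rather than merely a lower bound on it. The probability amplification and the binomial estimate are routine.
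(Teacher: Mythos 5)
Your proposal is correct and follows essentially the same route as the paper: random coloring with $kr$ colors, the dynamic program of \Cref{alg:k-path} over color sets $C\subseteq[kr]$ with $|C|=k$, exhaustive search over $r$-tuples of feasible color sets via \Cref{obs:k-path:lb}, and the standard $e^{kr}$-round amplification. You merely supply details the paper leaves implicit (the explicit repetition count, path reconstruction, and the exact two-way correspondence between vertex-set and color-set Hamming distances under a good coloring — which also quietly fixes the inequality direction as it should be read in \Cref{obs:k-path:lb}), so no substantive difference.
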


This can be derandomized by the $kr$-perfect hash family as well.

\begin{corollary}
    There is a deterministic algorithm that finds a set of $r$ paths of length $k - 1$ in $G$ maximizing $\minham(V(P_1), \ldots, V(P_r))$ or concludes that $G$ has no $k$-paths in time $2^{O(kr \log (kr))}|V|^{O(1)}$.
    Moreover, it does not contain false positive and returns an optimal solution with probability at least $\varepsilon$ if $G$ has at least one $k$-path.
\end{corollary}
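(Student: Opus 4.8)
The plan is to derandomize the Monte Carlo algorithm of \Cref{thm:diverse-k-path} by replacing the random coloring with an exhaustive search over a small family of colorings, namely a $kr$-perfect hash family. The key observation, already isolated in the proof of \Cref{thm:diverse-k-path}, is that the algorithm is correct whenever the chosen coloring $c\colon V \to [kr]$ is \emph{good} for some optimal solution $\mathcal P = \{P_1, \ldots, P_r\}$, and being good means exactly that $c$ is injective on the vertex set $W = \bigcup_{P \in \mathcal P} V(P)$. Since each $P_i$ has exactly $k$ vertices, we have $|W| \le kr$.

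First I would invoke the construction of Naor, Schulman, and Srinivasan~\cite{Naor:Splitters:1995}: for the ground set $V$ and parameter $kr$ there is a $kr$-perfect hash family $\mathcal H$ of functions $h\colon V \to [kr]$ with $|\mathcal H| = e^{kr}(kr)^{O(\log(kr))}\log|V|$, computable in time $e^{kr}(kr)^{O(\log(kr))}|V|\log|V|$. By definition, for every $kr$-element subset $X \subseteq V$ some $h \in \mathcal H$ is injective on $X$; since injectivity on a set is inherited by all of its subsets, the same holds for every subset of size at most $kr$, in particular for $W$. Hence at least one $h \in \mathcal H$ is good for $\mathcal P$.

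Next I would run, for each $h \in \mathcal H$ in turn, exactly the deterministic subroutine used inside \Cref{thm:diverse-k-path}: compute $\dpf(C, v)$ for every $C \subseteq [kr]$ with $|C| = k$ and every $v \in V$ via \Cref{alg:k-path} restricted to such color sets, read off the feasible color sets, and then enumerate all $r$-tuples of feasible color sets to find one maximizing $\minham(C_1, \ldots, C_r)$; by \Cref{obs:k-path:lb} this yields an actual family of $r$ paths of length $k-1$ whose diversity equals that maximum. The algorithm outputs the best such family over all $h \in \mathcal H$, or reports that $G$ has no $k$-path if no feasible color set is ever found. For correctness: every output is a genuine set of $r$ paths of length $k-1$ in $G$, so there are no false positives, and the ``no $k$-path'' conclusion is sound because feasibility of even a single color set already certifies a $k$-path; and for the $h$ that is good for an optimal $\mathcal P$, the returned value is exactly $OPT$, so the overall maximum equals $OPT$. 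As the procedure is deterministic, it trivially succeeds with probability $1 \ge \varepsilon$.

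For the running time, constructing $\mathcal H$ costs $2^{O(kr\log(kr))}|V|\log|V|$, and processing one $h$ costs $O\!\left(\binom{kr}{k}k(|V|+|E|) + \binom{kr}{k}^{r}(kr)^{O(1)}\right) = 2^{O(kr\log(kr))}(|V|+|E|)$ exactly as in \Cref{thm:diverse-k-path}; multiplying by $|\mathcal H| = 2^{O(kr\log(kr))}\log|V|$ gives $2^{O(kr\log(kr))}|V|^{O(1)}$ in total. I do not expect a genuine obstacle: the only mild point to handle carefully is that a $kr$-perfect hash family, which by definition deals with $kr$-element subsets, also covers the possibly smaller set $W$, which is immediate from monotonicity of injectivity under taking subsets.
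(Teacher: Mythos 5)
Your proposal is correct and follows exactly the route the paper intends: the paper's entire proof is the one-line remark that the algorithm of Theorem~\ref{thm:diverse-k-path} ``can be derandomized by the $kr$-perfect hash family,'' and you have simply spelled out that derandomization in full, including the (correct) observation that injectivity on $kr$-subsets covers the possibly smaller union of the optimal paths' vertex sets.
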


% \subsection{Sensitivity of Parameters}
% The running time of \Cref{thm:diverse-k-path} is upper bounded by $2^{O(kr\log(kr))}(n + m)$.
% A natural question is whether it is possible to improve the dependency of $k$ or $r$ in the running time.
% In this subsection, we discuss some limitation of the improvement on the exponential factors in the running time.
% To see this, we give some lower bounds based on the Exponential Time Hypothesis (ETH)~\cite{Impagliazzo:Which:2001}.
% We omit the detail of ETH here but an important implication from ETH is that many NP-hard problems do not admit subexponential-time algorithms unless ETH fails.
% Many (conditional) subexponential lower bounds are established based on ETH~\cite{Lokshtanov:Lower:2011}.
% In particular, the Hamiltonian path problem cannot be solved in time $2^{o(n)}$ assuming ETH~\cite{Lokshtanov:Lower:2011}.
% As we observed, the Hamiltonian path problem is equivalent to {\sc Diverse $k$-Path} for the case $k = n$ and $r = 1$.
% This implies that under ETH {\sc Diverse $k$-Path} cannot be solved in time $2^{o(k)}f(r)n^{c}$ for any function $f$ and any constant $c$.

% TODO: 3-path packing

\section{A General Framework for Finding Diverse Solutions}
In the previous section, we demonstrate a method for finding diverse $k$-paths using the color-coding technique due to Alon et al.~\cite{Alon:Color:1995}.
The power of this method is not limited to finding diverse $k$-paths.
Let $U$ be a finite set and let $\Pi: 2^U \to \{0, 1\}$ be an arbitrary function.
Throughout the paper, we assume that the function can be evaluated in time $|U|^{O(1)}$.
We call a subset $X$ of $U$  a {\em $\Pi$-set} if $\Pi(X) = 1$.
Many combinatorial objects can be expressed as this function (e.g., define $U$ as the set of vertices of a graph and $\Pi(X) = 1$ if and only if $X \subseteq U$ forms a path of length $k - 1$). 

The essence of the previous algorithm is that random coloring boils down the problem of finding diverse $\Pi$-sets to that of finding a single $C$-colorful $\Pi$-set for given color set $C$. 
More precisely, we have the following lemma.

\begin{lemma}\label{lem:gen:exact}
Suppose that, there is an algorithm $\mathcal A$ that, given a finite set $U$, a coloring $c: U \to [kr]$, and $C \subseteq [kr]$ with $|C| = k$, decides whether there is a $C$-colorful $\Pi$-set in time $f(k, r)|U|^{O(1)}$.
Then we can find $\Pi$-sets $U_1, \ldots, U_r$, with $|U_i| = k$ for $1 \le i \le r$ maximizing $\minham(U_1, \ldots, U_r)$ or conclude that there is no $\Pi$-set of size $k$ in $U$ in time $e^{rk}\left(\binom{kr}{k}f(k,r)|U|^{O(1)} + \binom{kr}{k}^r(kr)^{O(1)}\right)$ with high probability.
Moreover, a deterministic counterpart runs in time $2^{kr \log (kr)}f(k, r)|U|^{O(1)}$.
\end{lemma}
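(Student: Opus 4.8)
The plan is to lift the argument behind \Cref{thm:diverse-k-path} to this abstract setting, replacing vertices of a graph by elements of $U$ and $k$-paths by $\Pi$-sets of size $k$. The three ingredients will be a random $(kr)$-coloring of $U$, a reduction of the diversity of the $\Pi$-sets to the diversity of their color sets, and an exhaustive search over $r$-tuples of feasible color sets driven by the single-solution algorithm $\mathcal A$.

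First I would set up color-coding. Let $\mathcal U^{*}=(U_1^{*},\dots,U_r^{*})$ be an optimal solution if one exists, and note $|\bigcup_i U_i^{*}|\le kr$. Draw $c:U\to[kr]$ uniformly at random and call $c$ \emph{good} if it is injective on $\bigcup_i U_i^{*}$; since colors are chosen independently and uniformly, $c$ is good with probability at least $(kr)!/(kr)^{kr}\ge e^{-kr}$. For a fixed $c$, say that $C\subseteq[kr]$ with $|C|=k$ is \emph{feasible} if there is a $C$-colorful $\Pi$-set; running $\mathcal A$ on each of the $\binom{kr}{k}$ candidate color sets yields the family $\mathcal F_c$ of feasible color sets in total time $\binom{kr}{k}f(k,r)|U|^{O(1)}$.

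The heart of the proof is to show that, for a good $c$, $\max\{\minham(C_1,\dots,C_r):C_1,\dots,C_r\in\mathcal F_c\}=OPT$, where $OPT$ denotes the optimal diversity. For \emph{soundness}, valid for every coloring: if $C_1,\dots,C_r\in\mathcal F_c$ and $U_i$ is a $C_i$-colorful $\Pi$-set, then any element of $U_i$ whose color lies in $C_i\setminus C_j$ cannot belong to $U_j$, so $|U_i\setminus U_j|\ge|C_i\setminus C_j|$; adding the two symmetric differences gives $\minham(U_1,\dots,U_r)\ge\minham(C_1,\dots,C_r)$, hence $OPT\ge\minham(C_1,\dots,C_r)$ for every feasible tuple. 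For \emph{completeness}, when $c$ is good: the color sets $C_i^{*}:=c(U_i^{*})$ have size $k$, each is feasible (witnessed by $U_i^{*}$), and injectivity of $c$ on $\bigcup_i U_i^{*}$ gives $|U_i^{*}\setminus U_j^{*}|=|C_i^{*}\setminus C_j^{*}|$ for all $i,j$, so $\minham(C_1^{*},\dots,C_r^{*})=OPT$. Getting the direction of the soundness inequality right --- it is easy to slip the other way --- and observing that it combines with completeness into an \emph{exact} identity is the step I expect to be the main obstacle; a related technicality is that $\mathcal A$ only decides, so to return genuine $\Pi$-sets rather than color sets I would use the standard self-reduction of restricting the universe to a single candidate element per color of $C_i$ and querying $\mathcal A$, which recovers a $C_i$-colorful $\Pi$-set with $O(k|U|)$ calls.

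Finally I would assemble the algorithm: repeat the above for $\Theta(e^{rk})$ independent colorings; whenever $\mathcal F_c\neq\emptyset$, enumerate all $|\mathcal F_c|^{r}\le\binom{kr}{k}^{r}$ tuples in $\mathcal F_c^{r}$, keep one maximizing $\minham$, and recover $\Pi$-sets for it as above; output the best solution found over all colorings, or ``no $\Pi$-set of size $k$ in $U$'' if $\mathcal F_c$ was empty in every iteration. Soundness guarantees that every reported solution is genuine and has diversity at most $OPT$, a good coloring occurs with high probability, and for it the reported value equals $OPT$, so with high probability the output is optimal; likewise, if some $\Pi$-set of size $k$ exists then a random coloring is injective on it with probability at least $e^{-kr}$, so the negative answer is also correct with high probability. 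Summing per-iteration costs over the $\Theta(e^{rk})$ colorings gives the bound $e^{rk}\bigl(\binom{kr}{k}f(k,r)|U|^{O(1)}+\binom{kr}{k}^{r}(kr)^{O(1)}\bigr)$. For the deterministic statement I would replace the random colorings by a $(kr)$-perfect hash family over $U$, of size $e^{kr}(kr)^{O(\log kr)}\log|U|$ and constructible in comparable time~\cite{Naor:Splitters:1995}; by definition it contains a coloring injective on $\bigcup_i U_i^{*}$ (extend that set to $kr$ elements when $|U|>kr$, and use a single injective coloring when $|U|\le kr$), and running the per-iteration procedure for each member gives total time $2^{kr\log(kr)}f(k,r)|U|^{O(1)}$.
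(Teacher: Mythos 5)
Your proof is correct and follows essentially the same color-coding route as the paper, which only sketches this argument by deferring to the $k$-path section. You fill in details the paper leaves implicit --- notably the soundness inequality $\minham(U_1,\dots,U_r)\ge\minham(C_1,\dots,C_r)$ (which appears with the inequality reversed in the paper's \Cref{obs:k-path:lb}, evidently a typo) and the self-reduction for recovering witness $\Pi$-sets from the decision oracle --- but the underlying algorithm and analysis are identical.
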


\begin{proof}
    The proof is almost analogous to that in the previous section and hence we only give a sketch of the proof.
    
    Let $U_1, \ldots, U_r$ be $\Pi$-sets of $U$ that maximize $\minham(U_1, \ldots, U_r)$.
    We assign one of the colors in $[kr]$ to each element in $U$ independently and uniformly at random.
    Then, with probability at least $e^{-kr}$, each element in $\bigcup_{1 \le i \le r} U_i$ receives a distinct color.
    For $C \subseteq [kr]$, we say that $C$ is {\em feasible (under $c$)} if there is a $C$-colorful $\Pi$-set of $U$.
    By using $\mathcal A$, we can check in time $\binom{kr}{k}f(k,r)|U|^{O(1)}$ the feasibility of all $C \subseteq [kr]$ with $|C| = k$.
    To compute $\minham(U_1, \ldots, U_r)$, we simply find $r$ feasible color sets $C_1, \ldots, C_r$ maximizing $\minham(C_1, \ldots, C_r)$ by an exhaustive search.
    The deterministic algorithm is obtained by using the $kr$-perfect hash family as well.
\end{proof}

Note again that the problem of maximizing $\sumham$ is also solvable in the claimed running time in \Cref{lem:gen:exact}.
In the rest of this section, we discuss some applications of \Cref{lem:gen:exact}.

\subsection{Diverse Matchings}
Let $G = (V, E)$ be a graph.
A {\em matching} is a set of edges $M \subseteq E$ such that there are no edges in $M$ sharing a common end vertex.
Since the problem of computing a matching of maximum size is solvable in polynomial time~\cite{Edmonds:Paths:1965}, one may expect that {\sc Diverse Matching} or {\sc Min-Diverse Matching} can be solved in polynomial time as well.
However, it is unlikely: Finding two edge-disjoint perfect matchings in cubic graphs is known to be NP-hard~\cite{Holyer:NP-completeness:1981}.
To overcome this difficulty, we consider the problem of finding diverse matchings of size $k$ and show that this is fixed-parameter tractable parameterized by $k + r$.

\begin{theorem}\label{thm:matching}
    {\sc Diverse Matching} and {\sc Min-Diverse Matching} can be solved in time $2^{O(kr \log (kr))}|V|^{O(1)}$. 
\end{theorem}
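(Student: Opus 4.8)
The plan is to invoke the general framework of \Cref{lem:gen:exact} with ground set $U = E(G)$ and with $\Pi(X) = 1$ exactly when $X$ is a matching of $G$. Then a $\Pi$-set of size $k$ is precisely a matching of size $k$, and for a color set $C$ of size $k$, a $C$-colorful $\Pi$-set is a matching of size $k$ whose $k$ edges receive the $k$ distinct colors of $C$. Consequently \Cref{lem:gen:exact} reduces both {\sc Diverse Matching} and {\sc Min-Diverse Matching} to the following single-solution subroutine: given an edge-coloring $c\colon E \to [kr]$ and a set $C \subseteq [kr]$ with $|C| = k$, decide whether $G$ has a $C$-colorful matching of size $k$. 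If this subroutine runs in time $f(k,r)\,|V|^{O(1)}$, then \Cref{lem:gen:exact} gives a $2^{O(kr\log(kr))}\,f(k,r)\,|V|^{O(1)}$-time algorithm, so it suffices to solve the subroutine with $f(k,r) = 2^{O(k)}$ (up to factors polynomial in $|V|$); the enumeration over $r$-tuples of feasible color sets together with the standard ``good coloring'' argument, already packaged inside \Cref{lem:gen:exact}, then recovers the optimum for both $\sumham$ and $\minham$.

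For the subroutine, observe that a $C$-colorful matching of size $k$ is exactly a \emph{rainbow} matching of size $k$ in the subgraph $G_C := (V, \{e \in E : c(e) \in C\})$, since $|C| = k$ forces each color of $C$ to be used once. To find such a matching in fixed-parameter time I would apply color-coding a \emph{second} time, now to the vertices of $G_C$: a rainbow matching of size $k$ spans exactly $2k$ vertices, so I color $V(G_C)$ with $2k$ fresh colors --- randomly, or via a $2k$-perfect hash family for a deterministic algorithm --- and search for a matching of size $k$ that is simultaneously rainbow with respect to $c$ and colorful with respect to the fresh coloring. This can be found by a dynamic program whose state is a pair $(S,T)$, where $S$ is a subset of the $2k$ fresh colors and $T \subseteq C$, recording whether $G_C$ has a set of $|T|$ pairwise vertex-disjoint edges that use exactly the original colors in $T$ and whose endpoints use exactly the fresh colors in $S$; each transition removes one edge, deleting its two fresh endpoint-colors from $S$ and its original color from $T$, and the output is the entry for $([2k], C)$. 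Forcing distinct fresh endpoint-colors is precisely what makes the chosen edges vertex-disjoint, so the table is well defined; the state space is $2^{O(k)}$, a single run costs $2^{O(k)}|V|^{O(1)}$ time, and ranging over the hash family derandomizes it within the same bound (a randomized variant repeats the random fresh coloring $\Theta(e^{2k})$ times and never reports a false positive).

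I expect the whole difficulty to sit in this subroutine, and inside it in one place: ``is a matching of size $k$'', unlike ``is a $k$-path'', is not naturally a color-subset dynamic program over $E$, because vertex-disjointness is a global constraint rather than something appended one element at a time. The second layer of color-coding, on the vertices of $G_C$, is exactly what localizes that constraint, and its $2^{O(k)}$ overhead is absorbed by the $2^{O(kr\log(kr))}$ factor coming from \Cref{lem:gen:exact}; everything after that is bookkeeping. As a shortcut one could instead cite that {\sc Rainbow Matching} parameterized by the matching size is fixed-parameter tractable and plug that algorithm in as the subroutine.
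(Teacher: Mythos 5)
Your top-level reduction is exactly the paper's: invoke \Cref{lem:gen:exact} with $U = E(G)$ and $\Pi(M)=1$ iff $M$ is a matching, so that the whole problem collapses to deciding, for each $C\subseteq[kr]$ with $|C|=k$, whether $G$ has a $C$-colorful matching of size $k$. Where you differ is in how that subroutine is discharged: the paper simply deletes the edges whose colors lie outside $C$ and cites the known algorithm for finding a colorful (rainbow) matching of size $k$ due to Gupta et al., which runs in time $\left(\frac{1+\sqrt{5}}{2}\right)^k|V|^{O(1)}$ --- precisely the ``shortcut'' you mention in your last sentence --- whereas you construct the subroutine from scratch via a second round of color-coding on the \emph{vertices} with $2k$ fresh colors and a dynamic program over pairs $(S,T)$ of fresh-color and edge-color subsets. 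Your DP is correct: requiring the $2k$ endpoint colors to be distinct does enforce vertex-disjointness, the state space is $2^{O(k)}$, and the $e^{2k}$ (or $2k$-perfect hash family) overhead is absorbed into the $2^{O(kr\log(kr))}$ bound, so both routes yield the claimed running time. The trade-off is the usual one: the paper's citation gives a sharper constant in the exponent of the subroutine, while your version is self-contained and illustrates that the framework needs nothing beyond color-coding itself.
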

\begin{proof}
    Suppose that each edge of $G$ is colored with one of the colors $[kr]$.
    For each $C \subseteq [kr]$ with $|C| = k$, do the following.
    We first remove all the edges colored with a color not contained in $C$.
    Then, we apply the algorithm of finding a colorful matching of size $k$ due to~\cite{Gupta:parameterized:2019}, which runs in time $\left(\frac{1 + \sqrt{5}}{2}\right)^k|V|^{O(1)}$.
    Let $\Pi: 2^E \to \{0, 1\}$ be a function such that $\Pi(M) = 1$ if and only if $M$ is a matching of $G$.
    By \Cref{lem:gen:exact}, the statement follows.
\end{proof}

\subsection{Diverse Interval Scheduling}
We are given a set of tasks represented by intervals $\mathcal I = \{[a_i, b_i] : 1 \le i \le n\}$, where $[a, b]$ is the (closed) interval whose end points are $a \in \mathbb R$ and $b \in \mathbb R$ with $a \le b$.
A subset $\mathcal I' \subseteq \mathcal I$ is {\em feasible} if $\mathcal I'$ has no overlapping intervals, that is, $[a, b] \cap [a', b'] = \emptyset$ for distinct $[a, b], [a', b'] \in \mathcal I'$.
{\sc Interval Scheduling} is the problem of computing a maximum cardinality feasible $\mathcal I' \subseteq \mathcal I$.
This problem is also known as the maximum independent set problem on interval graphs and can be solved in polynomial time by a simple greedy algorithm.
We consider the diverse variants of {\sc Interval Scheduling}.
In {\sc Diverse Interval Scheduling} and {\sc Min-Diverse Interval Scheduling}, we are given a set $\mathcal I$ of intervals and integers $k$ and $r$.
The goals of the problems are to find $r$ sets of feasible intervals $\mathcal I_1, \ldots, \mathcal I_r \subseteq \mathcal I$ with $|\mathcal I_i| = k$ for $1 \le i \le r$ maximizing $\sumham(\mathcal I_1, \ldots, \mathcal I_r)$ and $\minham(\mathcal I_1, \ldots, \mathcal I_r)$, respectively.
To solve these problems in FPT time, by \Cref{lem:gen:exact}, it suffices to show that the ``colorful version'' can be solved in FPT time.
This problem is also known as {\sc Job Interval Selection} in the literature.
In addition to the input of {\sc Interval Scheduling}, we are also given a coloring function $c: \mathcal I \to [k]$.
The goal of {\sc Job Interval Selection} is to find a maximum cardinality set of colorful feasible intervals in $\mathcal I$.
Halld\'orsson and Karlsson~\cite{Halldorsson:Strip:2006} showed that this problem can be solved in time $2^k|\mathcal I|^{O(1)}$, yielding the following result together with \Cref{lem:gen:exact}.

\begin{theorem}\label{thm:interval}
    {\sc Diverse Interval Scheduling} and {\sc Min-Diverse Interval Scheduling} can be solved in time $2^{O(kr \log (kr))}|\mathcal I|^{O(1)}$. 
\end{theorem}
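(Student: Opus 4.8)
The plan is to derive \Cref{thm:interval} from the general framework of \Cref{lem:gen:exact}, using the algorithm of Halld\'orsson and Karlsson for the colorful variant as the required building block. First I would cast {\sc Diverse Interval Scheduling} in the language of \Cref{lem:gen:exact}: take the ground set to be $U=\mathcal I$ and define $\Pi\colon 2^{\mathcal I}\to\{0,1\}$ by $\Pi(\mathcal I')=1$ exactly when $\mathcal I'$ is feasible, i.e.\ contains no two overlapping intervals. This function is evaluable in time $|\mathcal I|^{O(1)}$ by checking all pairs, and a $\Pi$-set of size $k$ is precisely a feasible selection of $k$ intervals; hence finding $\Pi$-sets $\mathcal I_1,\dots,\mathcal I_r$ of size $k$ maximizing $\minham$ (resp.\ $\sumham$) is exactly {\sc Min-Diverse Interval Scheduling} (resp.\ {\sc Diverse Interval Scheduling}).

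Next I would supply the algorithm $\mathcal A$ demanded by \Cref{lem:gen:exact}. Given a coloring $c\colon\mathcal I\to[kr]$ and a color set $C\subseteq[kr]$ with $|C|=k$, I would discard every interval whose color lies outside $C$ and relabel the $k$ colors of $C$ as $[k]$. A $C$-colorful $\Pi$-set is then precisely a feasible set of intervals in the reduced instance that uses each of the $k$ colors exactly once, and such a set exists if and only if a maximum-cardinality colorful feasible subset has size $k$. Computing this maximum is exactly {\sc Job Interval Selection}, which by~\cite{Halldorsson:Strip:2006} is solvable in time $2^{k}|\mathcal I|^{O(1)}$; so $\mathcal A$ runs within the bound $f(k,r)|\mathcal I|^{O(1)}$ with $f(k,r)=2^{k}$.

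Finally, plugging $f(k,r)=2^{k}$ into \Cref{lem:gen:exact} yields a randomized running time of $e^{rk}\bigl(\binom{kr}{k}2^{k}|\mathcal I|^{O(1)}+\binom{kr}{k}^{r}(kr)^{O(1)}\bigr)$ and a deterministic running time of $2^{kr\log(kr)}2^{k}|\mathcal I|^{O(1)}$, both of which are $2^{O(kr\log(kr))}|\mathcal I|^{O(1)}$, and this holds for both the $\minham$ and $\sumham$ objectives, establishing \Cref{thm:interval}. The one point that needs genuine care is the middle step: after deleting the off-color intervals and relabeling, one must verify that ``$C$-colorful $\Pi$-set'' and ``size-$k$ colorful feasible set of the reduced instance'' really coincide, so that the off-the-shelf {\sc Job Interval Selection} algorithm is a legitimate instantiation of $\mathcal A$. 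Everything else is the routine bookkeeping already carried out inside the proof of \Cref{lem:gen:exact}.
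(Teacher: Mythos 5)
Your proposal is correct and follows essentially the same route as the paper: both instantiate \Cref{lem:gen:exact} with $\Pi$ encoding interval feasibility and supply the required colorful subroutine via the $2^k|\mathcal I|^{O(1)}$-time {\sc Job Interval Selection} algorithm of Halld\'orsson and Karlsson. Your extra care in checking that a $C$-colorful $\Pi$-set corresponds to a maximum colorful feasible set of size exactly $k$ in the reduced instance is a detail the paper leaves implicit, but it is the same argument.
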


\subsection{Diverse $k$-Paths Revisited}
As we have discussed in the previous section, the problem of finding $r$ paths $P_1, \ldots, P_r$ of length $k - 1$ maximizing $\minham(V(P_1), \ldots, V(P_r))$ can be solved in FPT time.
To solve {\sc Diverse $k$-Paths} and {\sc Min-Diverse $k$-Paths}, we use Lemma~\ref{lem:gen:exact} and a modified version of Algorithm~\ref{alg:k-path}.
Suppose that we are given an edge-colored graph $G = (V, E)$ with $c_E : E \to [(k - 1)r]$.
For each $D \subseteq [(k - 1)r]$ with $|D| = k - 1$, we design an algorithm for finding a $k$-path whose edges are $D$-colorful.
The algorithm is quite similar to the $k$-path algorithm described in the previous section.
We assign a color from $[k]$ to each vertex of $G$ independently and uniformly at random.
Fix a vertex coloring $c_V : V \to [k]$.
For each $C \subseteq [k]$ and for each $D' \subseteq D$ with $|C| - 1 = |D'|$,
we say that a path $P$ is {\em $(C, D')$-colorful} if $V(P)$ is $C$-colorful and $E(P)$ is $D'$-colorful.
The following recurrence immediately yields a dynamic programming algorithm for $([k], D)$-colorful paths.
For each $C \subseteq [k]$ and $D' \subseteq D$ with $|C|-1 = |D| \ge 1$, and $v \in V$ with $c_V(v) \in C$, $G$ has a $(C, D')$-colorful path ending at $v$ if and only if there is a $(C \setminus \{c(v)\}, D' \setminus \{c_E(\{u, v\})\})$-colorful path ending at $u \in V$ with $\{u, v\} \in E$.
This recurrence can be straightforwardly implemented by dynamic programming whose running time is in $O(4^kk(|V| + |E|))$.
By using the $k$-perfect hash family for $c_v$, we finally conclude that {\sc Diverse $k$-Path} and {\sc Min-Diverse $k$-Path} are fixed-parameter tractable.

\begin{theorem}\label{thm:k-path:edge}
    {\sc Diverse $k$-Path} and {\sc Min-Diverse $k$-Path} can be solved in time $2^{O(kr \log (kr))}|V|^{O(1)}$, where $n$ is the number of vertices of the input graph.
\end{theorem}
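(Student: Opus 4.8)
The plan is to invoke the general framework of \Cref{lem:gen:exact} with the ground set $U = E(G)$ and the predicate $\Pi\colon 2^{E}\to\{0,1\}$ defined by $\Pi(F)=1$ if and only if $G[F]$ is a path with exactly $k-1$ edges. Then a $\Pi$-set of size $k-1$ is precisely the edge set of a $k$-path, and for $\Pi$-sets $F_1,\dots,F_r$ the quantities $\sumham(F_1,\dots,F_r)$ and $\minham(F_1,\dots,F_r)$ are exactly the objective functions of {\sc Diverse $k$-Path} and {\sc Min-Diverse $k$-Path}. The role of ``$k$'' in \Cref{lem:gen:exact} is played here by $k-1$, so the lemma hands us a coloring $c_E\colon E\to[(k-1)r]$ and reduces everything to the colorful subproblem: given $c_E$ and $D\subseteq[(k-1)r]$ with $|D|=k-1$, decide whether $G$ has a $D$-colorful $\Pi$-set, i.e.\ a $k$-path whose $k-1$ edges use exactly the colors in $D$.

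First I would build the algorithm $\mathcal A$ for this colorful subproblem. Discard every edge whose $c_E$-color lies outside $D$; a $D$-colorful $k$-path is then exactly a $k$-path in the remaining graph that is additionally edge-$D$-colorful. To find one in FPT time I would add a second layer of color coding, this time on vertices: iterate over all colorings $c_V\colon V\to[k]$ in a $k$-perfect hash family (of size $e^{k}k^{O(\log k)}\log|V|$, \cite{Naor:Splitters:1995}), and for each $c_V$ compute by dynamic programming the table $\dpf(C,D',v)$ for $C\subseteq[k]$, $D'\subseteq D$ with $|C|=|D'|+1$, and $v\in V$ with $c_V(v)\in C$, which is {\bf True} iff there is a path ending at $v$ whose vertex set is $C$-colorful under $c_V$ and whose edge set is $D'$-colorful under $c_E$. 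The recurrence is the one already stated: $\dpf(C,D',v)$ is the disjunction over neighbours $u$ with $\{u,v\}\in E$ and $c_E(\{u,v\})\in D'$ of $\dpf(C\setminus\{c_V(v)\},\,D'\setminus\{c_E(\{u,v\})\},\,u)$, with single vertices as base cases, and the answer is {\bf True} iff $\dpf([k],D,v)$ is {\bf True} for some $v$. Correctness is the usual color-coding argument: if a $D$-colorful $k$-path $P$ exists, the hash family contains some $c_V$ making $V(P)$ colorful, and for that $c_V$ the DP certifies a (possibly different) $D$-colorful $k$-path; conversely every {\bf True} entry yields one. Each DP costs $O(4^{k}k(|V|+|E|))$, so $\mathcal A$ runs in time $2^{O(k\log k)}|V|^{O(1)}$; set $f(k-1,r)=2^{O(k\log k)}$.

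Finally I would plug $\mathcal A$ into \Cref{lem:gen:exact} (with parameter $k-1$ in place of $k$): its deterministic version finds $\Pi$-sets $F_1,\dots,F_r$ of size $k-1$ maximizing $\minham$ — and, as remarked after the lemma, $\sumham$ — in time $2^{(k-1)r\log((k-1)r)}f(k-1,r)|E|^{O(1)}$, which is $2^{O(kr\log(kr))}|V|^{O(1)}$ since $|E|\le|V|^{2}$. Unwinding the reduction, the returned tuple of paths is exactly an optimal solution to {\sc Diverse $k$-Path} / {\sc Min-Diverse $k$-Path}. I expect the only delicate point to be the interaction of the two nested colorings: the outer edge coloring $c_E$ with $(k-1)r$ colors is handed down by \Cref{lem:gen:exact} and is what makes diversity over \emph{edge} sets tractable, whereas the inner vertex coloring $c_V$ with $k$ colors is a purely internal device of $\mathcal A$ whose sole purpose is to pin down a $k$-path realizing the prescribed edge-color pattern $D$. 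One must check that the DP state records both the $C$-colorfulness in $c_V$ (which bounds the path length and prevents revisiting vertices) and the $D'$-colorfulness in $c_E$ (which enforces the pattern), and that these two constraints are tracked independently so that a {\bf True} entry $\dpf([k],D,v)$ really corresponds to a simple $k$-path whose edges are $D$-colorful.
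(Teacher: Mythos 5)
Your proposal is correct and follows essentially the same route as the paper: it instantiates \Cref{lem:gen:exact} with $U=E(G)$ and an edge coloring with $(k-1)r$ colors, and solves the colorful subproblem by a second, internal layer of vertex color-coding (derandomized via a $k$-perfect hash family) with the DP over states $(C,D',v)$ and the same $O(4^k k(|V|+|E|))$ cost per coloring. The only difference is that you spell out the two-level coloring interaction and the correctness of the DP more explicitly than the paper does.
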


\subsection{Diverse Subgraphs of Bounded Treewidth}
Let $G = (V, E)$ and $G' = (V', E')$ be graphs.
We say that $G$ is {\em isomorphic to $G'$} if there is a bijection $f$ between $V$ and $V'$ such that $\{u, v\} \in E$ if and only if $\{f(u), f(v)\} \in E'$ for every pair of vertices $u,v \in V$.
Given graphs $G$ and $H$, the subgraph isomorphism problem asks whether $G$ has a subgraph isomorphic to $H$.
This problem is a common generalization of many NP-hard problems, including the Hamiltonian path problem and the $k$-clique problem.
Since the $k$-clique problem, the problem of finding a clique of $k$ vertices, is W[1]-hard parameterized by $k$~\cite{Downey:Fixed:1995}, finding diverse subgraphs isomorphic to $H$ is also hard for general $H$.
When $H$ is ``sparse'', however, the problem is fixed-parameter tractable~\cite{Alon:Color:1995}.

A {\em tree decomposition} of $G = (V, E)$ is a pair $(T, \{X_i \subseteq V: i \in I\})$ of a rooted tree $T$ with node set $I$ and a collection $\{X_i : i \in I\}$ of subsets of $V$ such that
(1) $\bigcup_{i \in I} X_i = V$; (2) for each $\{u, v\} \in E$, there is an $i \in I$ with $\{u, v\} \subseteq X_i$; and (3) for each $v \in V$, the set of nodes $\{i \in I: v \in X_i\}$ induces a subtree of $T$.
The {\em width} of a tree decomposition $(T, \{X_i : i \in I\})$ is defined as $\max_{i \in I}|X_i| - 1$, and the {\em treewidth} of $G$ is the minimum integer $t$ such that $G$ has a tree decomposition of width $t$.
Alon et al.~\cite{Alon:Color:1995} showed that when the treewidth of $H$ is a constant, the subgraph isomorphism problem is fixed-parameter tractable.

\begin{theorem}[\cite{Alon:Color:1995}]\label{thm:iso-tw}
    Let $G$ and $H$ be graphs.
    Suppose that $|V(H)| = k$ and the treewidth of $H$ is $t$.
    Then, there is an algorithm that decides if $G$ has a subgraph isomorphic to $H$ in time $2^{O(k)}|V(G)|^{t+1}\log |V(G)|$.
\end{theorem}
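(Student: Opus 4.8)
The plan is to establish \Cref{thm:iso-tw} by combining the color-coding technique with dynamic programming over a tree decomposition of the pattern graph $H$. First I would color the vertices of $G$ with $k$ colors: pick $c\colon V(G) \to [k]$ uniformly at random. If $G$ contains a copy $H'$ of $H$, then since $|V(H')| = k$, the probability that the $k$ vertices of $H'$ receive pairwise distinct colors is at least $k!/k^k = \Omega(e^{-k})$; call such an $H'$ \emph{colorful}. Hence $O(e^k)$ random trials suffice to witness a colorful copy with constant probability, and this is derandomized in the usual way by replacing the random colorings with a $k$-perfect hash family on $V(G)$ of size $e^k k^{O(\log k)}\log|V(G)|$, exactly as in the discussion following \Cref{thm:k-path}. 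So it suffices to give an algorithm that, given $c\colon V(G)\to[k]$, decides whether $G$ has a colorful copy of $H$ in time $2^{O(k)}|V(G)|^{t+1}$.

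For the colorful subproblem I would fix a tree decomposition $(T, \{X_i : i \in I\})$ of $H$ of width $t$; since $|V(H)| = k$, such a decomposition can be computed and turned into a nice one in time $2^{O(k)}$. For a node $i$ of $T$ let $V_i \subseteq V(H)$ be the union of the bags in the subtree of $T$ rooted at $i$, and set $H_i = H[V_i]$. The dynamic programming table at node $i$ is indexed by a pair $(\phi, C)$, where $\phi\colon X_i \to V(G)$ is injective and maps every edge of $H[X_i]$ to an edge of $G$, and $C \subseteq [k]$; the entry is \textbf{True} iff there is an injective homomorphism $\psi\colon H_i \to G$ that extends $\phi$, is colorful (i.e., $c\circ\psi$ is injective), and satisfies $c(\psi(V_i)) = C$. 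There are at most $|V(G)|^{t+1}\cdot 2^k$ entries per node, and $G$ contains a colorful copy of $H$ iff some entry $(\phi, [k])$ is \textbf{True} at the root.

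The transitions are the standard leaf/introduce/forget/join rules. A leaf node is the base case; an introduce node extends $\phi$ to the new pattern vertex, verifies the adjacencies required in $G$, and adds that vertex's color to $C$ (which must not already lie in $C$); a forget node projects $\phi$ onto the smaller bag and takes a disjunction over the dropped image vertex. The crucial and most delicate case is a join node $i$ with children $j_1, j_2$ (so $X_i = X_{j_1} = X_{j_2}$ and $V_{j_1}\cap V_{j_2} = X_i$): I merge $(\phi, C_1)$ from $j_1$ and $(\phi, C_2)$ from $j_2$ into $(\phi, C_1\cup C_2)$ exactly when $C_1\cap C_2 = c(\phi(X_i))$. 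Here colorfulness is what makes the merge correct: the two partial homomorphisms agree on $X_i$, are each injective on colors, and use disjoint colors outside $\phi(X_i)$, so their images meet only in $\phi(X_i)$ and the glued map is again an injective homomorphism. Iterating over all compatible pairs of color sets at each $\phi$ (there are $2^{O(k)}$ of them, or one can use subset convolution), the dynamic programming runs in $2^{O(k)}|V(G)|^{t+1}$ time per coloring, and multiplying by the size of the hash family yields the claimed $2^{O(k)}|V(G)|^{t+1}\log|V(G)|$ bound. The main obstacle is precisely this join step — getting the color bookkeeping right so that disjointness of colors outside the shared bag certifies injectivity of the combined embedding; the remaining cases are routine tree-decomposition dynamic programming.
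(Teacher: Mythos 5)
Your proposal is correct and follows essentially the same route as the paper's own sketch: color-code $V(G)$ with $k$ colors (derandomized via a $k$-perfect hash family), then run a dynamic program over a tree decomposition of $H$ whose states at node $i$ record an injective placement of the bag $X_i$ into $G$ together with the set of colors used by the partial embedding of $H_i$, giving $2^{O(k)}|V(G)|^{t+1}$ states and the claimed total running time. The extra detail you supply on the join step (using disjointness of colors outside the shared bag to certify injectivity of the merged embedding) is consistent with, and a correct elaboration of, the paper's argument.
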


We briefly describe their idea of \Cref{thm:iso-tw}.
Let $(T, \{X_i : i \in I)$ be a tree decomposition of $H$.
For each $i \in I$, we denote by $H_i$ the subgraph of $H$ induced by the vertices appeared in $X_i$ or $X_j$ for some descendant $j \in I$ of $i$.
They also use the color-coding technique as follows.
Let $c: V \to [k]$ be a coloring on $V$.
For each $C \subseteq [k]$, $i \in I$, $Z \subseteq V$ with $|Z| = |X_i|$, and a bijection $f_Z: Z \to X_i$, the algorithm decides whether $G$ has a $C$-colorful subgraph $H'$ isomorphic to $H_i$ such that $Z \subseteq V(H')$ and the bijection $f: V(H') \to V(H)$ extends $f_Z$, that is, $f(z) = f_Z(z)$ for all $z \in Z$.
This can be done in time $2^{O(k)}|V(G)|^{t+1}$ by dynamic programming and hence \Cref{thm:iso-tw} follows.

By combining this algorithm with \Cref{lem:gen:exact}, we have the following result.
\begin{theorem}
    Let $G$ and $H$ be graphs.
    Suppose that $|V(H)| = k$ and the treewidth of $H$ is $t$.
    Then there is a $2^{O(kr \log (kr))}|V(G)|^{t+O(1)}$-time algorithm that finds $r$ subgraphs $H_1, \ldots, H_r$ isomorphic to $H$ such that $\sumham(V(H_1), \ldots, V(H_r))$ or $\minham(V(H_1), \ldots, V(H_r))$ is maximized, or concludes $G$ has no subgraph isomorphic to $H$. 
\end{theorem}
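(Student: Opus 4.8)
The plan is to instantiate the general framework of \Cref{lem:gen:exact} with ground set $U = V(G)$ and the predicate $\Pi : 2^{V(G)} \to \{0,1\}$ defined by $\Pi(X) = 1$ if and only if $G$ contains a subgraph whose vertex set is exactly $X$ that is isomorphic to $H$. Since $|V(H)| = k$, every $\Pi$-set has size exactly $k$, and for $r$ subgraphs $H_1,\dots,H_r$ of $G$ isomorphic to $H$ the quantities $\sumham(V(H_1),\dots,V(H_r))$ and $\minham(V(H_1),\dots,V(H_r))$ are precisely the diversity of the corresponding $r$ $\Pi$-sets. Hence, by \Cref{lem:gen:exact}, it suffices to exhibit an algorithm $\mathcal A$ that, given a coloring $c : V(G) \to [kr]$ and a color set $C \subseteq [kr]$ with $|C| = k$, decides whether there is a $C$-colorful $\Pi$-set, i.e.\ a $C$-colorful subgraph of $G$ isomorphic to $H$, in time $2^{O(k)}|V(G)|^{t + O(1)}$.

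To build $\mathcal A$, I would first discard every vertex of $G$ whose color under $c$ lies outside $C$, obtaining a graph $G_C$, then fix an arbitrary bijection $\phi : C \to [k]$ and recolor $G_C$ by $c'(v) = \phi(c(v))$. A $C$-colorful subgraph of $G$ isomorphic to $H$ must lie entirely within $G_C$ and becomes a $[k]$-colorful subgraph of $G_C$ isomorphic to $H$ under $c'$, and conversely; so it is enough to decide whether the $k$-colored graph $(G_C, c')$ has a $[k]$-colorful subgraph isomorphic to $H$. This is exactly the colorful subproblem solved inside the proof of \Cref{thm:iso-tw}: taking a tree decomposition $(T,\{X_i\})$ of $H$ of width $t$, one fills a table indexed by triples $(C', i, f_Z)$, where $C' \subseteq [k]$, $i$ is a node of $T$, and $f_Z$ is a bijection from a vertex subset $Z$ of $G_C$ with $|Z| = |X_i|$ onto $X_i$, recording whether $G_C$ has a $C'$-colorful partial copy of $H_i$ whose isomorphism extends $f_Z$; the recurrence runs over nodes of $T$ in time $2^{O(k)}|V(G)|^{t+1}$.

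Feeding this $\mathcal A$ into \Cref{lem:gen:exact} with $f(k,r) = 2^{O(k)}$ and $|U|^{O(1)} = |V(G)|^{t + O(1)}$ yields a deterministic algorithm running in time $2^{O(kr\log(kr))}\cdot 2^{O(k)}\cdot|V(G)|^{t + O(1)} = 2^{O(kr\log(kr))}|V(G)|^{t+O(1)}$ that returns $r$ subgraphs isomorphic to $H$ maximizing $\minham(V(H_1),\dots,V(H_r))$ or certifies that $G$ has no subgraph isomorphic to $H$; the $\sumham$ variant of \Cref{lem:gen:exact} handles the $\sumham$ objective identically. The one point I expect to require care, rather than any genuinely new difficulty, is the interface with \Cref{thm:iso-tw}: its statement is phrased for the uncolored problem, so I would either invoke the colorful dynamic-programming subroutine described in the text following \Cref{thm:iso-tw} directly, or note that removing its outer color-coding layer is precisely what supplies the colorful oracle $\mathcal A$ demanded by \Cref{lem:gen:exact}. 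Everything else is a routine composition of the two black boxes.
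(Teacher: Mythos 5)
Your proposal is correct and follows essentially the same route as the paper: both instantiate \Cref{lem:gen:exact} with $\Pi$-sets being vertex sets of copies of $H$ and use the colorful tree-decomposition dynamic program from the proof sketch of \Cref{thm:iso-tw} as the oracle $\mathcal A$. The only difference is that you spell out the interface step (restricting to vertices colored in $C$ and rebijecting $C$ onto $[k]$) which the paper leaves implicit; this is a harmless and indeed welcome clarification.
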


We can extend this algorithm to {\sc Diverse Subgraph Isomorphism} and {\sc Min-Diverse Subgraph Isomorphism} by simultaneously considering edge-colorings as in \Cref{thm:k-path:edge}.
To be more precise, let $c_V: V \to [k]$ and $c_E: E \to [|E(H)|]$ be random vertex- and edge-colorings.
For each $C \subseteq [k]$, $D \subseteq [|E(H)|]$, $i \in I$, $Z \subseteq V$ with $|Z| = |X_i|$, and a function $f': Z \to X_i$, we can decide whether $G$ has a $(C, D)$-colorful subgraph $H'$ isomorphic to $H_i$ such that $Z \subseteq V(H')$ and the bijection $f: V(H') \to V(H)$ extends $f'$, that is, $f(z) = f'(z)$ for all $z \in Z$ by dynamic programming as well as the vertex-colored case.
Here, a subgraph $H'$ is $(C, D)$-colorful if $V(H')$ is $C$-colorful and $E(H')$ is $D$-colorful.
It is not hard to see that the running time of this dynamic programming algorithm is $2^{O(k + |E(H)|)}|V(G)|^{t+1}$.

\begin{theorem}
    {\sc Diverse Subgraph Isomorphism} and {\sc Min-Diverse Subgraph Isomorphism} can be solved in time $f(k,r)|V(G)|^{t+O(1)}$ for some function $f$, where $k = |V(H)|$, and $t$ is the treewidth of $H$.
\end{theorem}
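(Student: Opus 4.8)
The plan is to derive the theorem from the general framework of \Cref{lem:gen:exact} applied to the ground set of \emph{edges}. I would set $U = E(G)$ and define $\Pi : 2^{U} \to \{0,1\}$ by $\Pi(X) = 1$ if and only if $G[X]$ is isomorphic to $H$. Assuming (as is standard, and without loss of generality after deleting the isolated vertices of $H$ and of $G$) that $H$ has no isolated vertex, every $\Pi$-set has size exactly $\ell := |E(H)|$, and a family of $\Pi$-sets of size $\ell$ maximizing $\sumham$ (resp. $\minham$) is precisely a solution to {\sc Diverse Subgraph Isomorphism} (resp. {\sc Min-Diverse Subgraph Isomorphism}). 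Thus, by \Cref{lem:gen:exact} with ``$k$'' instantiated to $\ell$, it suffices to supply an algorithm $\mathcal A$ that, given an edge-coloring $c : E(G) \to [\ell r]$ and a color set $D \subseteq [\ell r]$ with $|D| = \ell$, decides whether $G$ contains a $D$-colorful subgraph isomorphic to $H$. Since $H$ has no isolated vertex, $|V(G)| \le 2|E(G)| = 2|U|$, so a running time of the form $g(k)\,|V(G)|^{t+1}$ for $\mathcal A$ fits the bound $f(k,r)\,|U|^{O(1)}$ required by the lemma, and the $|V(G)|^{t+1}$ factor is exactly what produces the $|V(G)|^{t+O(1)}$ in the final bound.

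To build $\mathcal A$, I would first delete from $G$ all edges whose color lies outside $D$ and relabel $D$ as $[\ell]$, reducing the task to: does the resulting edge-colored graph contain a subgraph isomorphic to $H$ whose $\ell$ edges receive all $\ell$ colors, each exactly once? This colorful subgraph isomorphism problem is solved by combining the edge-coloring with a vertex-coloring exactly as sketched just before the theorem: take a $k$-perfect hash family $\mathcal H$ on $V(G)$ of size $e^{k}k^{O(\log k)}\log|V(G)|$, and for each $h \in \mathcal H$ run the dynamic program over a width-$t$ tree decomposition $(T,\{X_i : i \in I\})$ of $H$ that, for every node $i \in I$, every $C \subseteq [k]$, every $D' \subseteq [\ell]$, every $Z \subseteq V(G)$ with $|Z| = |X_i|$, and every bijection $f' : Z \to X_i$, decides whether $G$ has a $(C,D')$-colorful subgraph isomorphic to $H_i$ containing $Z$ whose isomorphism to $H_i$ extends $f'$. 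If $G$ has an $[\ell]$-edge-colorful copy $H'$ of $H$, then for some $h \in \mathcal H$ the $k$ vertices of $H'$ get pairwise distinct colors and the root state $([k],[\ell])$ becomes realizable; conversely any realizable root state yields a genuine subgraph isomorphic to $H$. The DP has $2^{O(k+\ell)}|V(G)|^{t+1}$ states, each computed from children states by enumerating the $O(|V(G)|)$ extensions of $f'$ at an introduce node and taking additive unions of the $D'$-budgets at a join node, so $\mathcal A$ runs in $2^{O(k+\ell)}k^{O(\log k)}|V(G)|^{t+1}\log|V(G)| = g(k)\,|V(G)|^{t+1}$, using $\ell \le \binom{k}{2}$.

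Feeding $\mathcal A$ into \Cref{lem:gen:exact} (with ground-set size $|U| = |E(G)|$ and solution size $\ell \le \binom{k}{2}$) yields a deterministic algorithm for both {\sc Diverse Subgraph Isomorphism} and {\sc Min-Diverse Subgraph Isomorphism} running in time $2^{O(\ell r \log(\ell r))}\,g(k)\,|V(G)|^{t+O(1)} = f(k,r)\,|V(G)|^{t+O(1)}$ for a suitable computable $f$; as remarked after \Cref{lem:gen:exact}, the same framework covers the $\sumham$ objective, so both problems are settled at once.

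The step I expect to require the most care is the dynamic program inside $\mathcal A$: one must pin down how the edge-color budget $[\ell]$ is distributed over the tree decomposition of $H$ — each edge of $H$ is ``charged'' at a node whose bag contains both its endpoints (say, the topmost such node, exploiting property~(2)), and at a join node the $D'$-sets consumed by the two child subproblems must be disjoint and union to the parent's $D'$ — and one must verify that layering this bookkeeping on top of the vertex-coloring DP behind \Cref{thm:iso-tw} preserves correctness and keeps the state count at $2^{O(k+\ell)}|V(G)|^{t+1}$. Everything else (the reduction to $\Pi$-sets over edges, the handling of isolated vertices, absorbing the $|V(G)|^{t+1}$ factor into the bound of \Cref{lem:gen:exact}, and the final arithmetic) is routine.
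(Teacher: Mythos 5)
Your proposal is correct and takes essentially the same route as the paper: it layers an edge-color budget $D'$ on top of the vertex-coloring dynamic program behind \Cref{thm:iso-tw} to get a $(C,D)$-colorful subgraph isomorphism subroutine, and feeds that into \Cref{lem:gen:exact}. You are in fact somewhat more explicit than the paper about the ground set being $E(G)$, the need for $|E(H)|\cdot r$ edge colors, and the join-node bookkeeping.
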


\subsection{Diverse Subgraphs with FO Properties}
We have seen several examples for which finding diverse solutions is fixed-parameter tractable.
In particular, {\sc Diverse $k$-Path} and {\sc Min-Diverse $k$-Path} are fixed-parameter tractable parameterized by $k + r$.
In contrast to this tractability, finding a single {\em induced} $k$-path is known to be $W[2]$-complete~\cite{Chen:parameterized:2007} with respect to parameter $k$, where an {\em induced path} in a graph is a path such that non-consecutive vertices on the path are not adjacent in the graph.

A typical approach to overcoming such intractable problems is to restrict input graphs to some sparse graph classes, such as planar graphs, bounded-treewidth graphs, or $H$-minor free graphs with fixed $H$.
One of the most remarkable results in this context is an algorithmic metatheorem for sparse graphs with first-order logic (FO) given by Grohe et al.~\cite{Grohe:Deciding:2017}.
Many graph properties can be expressed by a formula in FO and, in particular, so is the property of being a colorful induced path of length $k - 1$.
% A short note for the syntax of FO formulas on graphs is given in the full version \cite{Hanaka:Finding:2020}.

In first-order logic formulas on graphs, we are allowed to use variables for vertices, basic logical connectives $\lor, \land, \neg, \implies, \iff$, and the universal and existential quantifiers $\forall, \exists$ for variables.
Moreover, unary and binary relational symbols on variables can be used, such as $x = y$, $E(x, y)$, and $C_i(x)$ that indicate equivalence of variables, adjacency of variables, and color indicator of variables, respectively.

For example, define a predicate $\phi_p(v_1, \ldots, v_k)$ as
\[
    \phi_p(v_1, \ldots, v_k) := \bigwedge_{1 \le i < k} E(v_i, v_{i+1}) \land \bigwedge_{\substack{1 \le i < j \le k\\
    j \neq i + 1}}\neg E(v_i, v_j).
\]
This indicates that vertices $v_1, \ldots v_k$ form an induced path.
We can also express ``colorful constraints'' using color indicators as
\[
    \phi_c(v_1, \ldots, v_k) :=\bigwedge_{\substack{1 \le i, j\le k\\ i \neq j}} \bigwedge_{1 \le \ell \le k} (C_{\ell}(v_i) \implies \neg C_{\ell}(v_j)),
\]
where $C_{\ell}(v_j)$ means ``vertex $v_j$ is colored with $\ell$''.
We let
\[
    \phi := \exists v_1, \ldots, v_k(\phi_p(v_1, \ldots, v_k) \land \phi_c(v_1, \ldots, v_k)).
\]
Then, it is easy to see that $G \models \phi$ if and only if $G$ has a colorful induced path of length $k - 1$ with respect to $c : V \to [k]$.
Grohe et al.~\cite{Grohe:Deciding:2017} showed that for a nowhere dense class $\mathcal C$ of graphs and an FO formula $\phi$, checking whether $G \models \phi$ is fixed-parameter tractable parameterized by the length $|\phi|$ of formula $\phi$.

\begin{theorem}[\cite{Grohe:Deciding:2017}]\label{thm:FO-fpt}
    Let $G = (V, E)$ be a graph that is in a nowhere dense class of graphs and $c: V \to [k]$ a coloring on vertices.
    Let $\Pi$ be a property on graphs that is expressible by a formula $\phi$ in first-order logic. 
    Then, for every $\varepsilon > 0$, one can decide whether $G$ has a $[k]$-colorful subgraph satisfying $\Pi$ in time $f(\varepsilon, |\phi|, k)|V|^{1+\varepsilon}$ for some function $f$.
\end{theorem}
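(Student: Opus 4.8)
The plan is to read this off from the genuine first-order model-checking theorem of Grohe et al.~\cite{Grohe:Deciding:2017} --- on any nowhere dense class of graphs, deciding $G\models\psi$ for a sentence $\psi$ takes $g(\varepsilon,|\psi|)\,|V(G)|^{1+\varepsilon}$ for every $\varepsilon>0$ --- after (i) folding the coloring $c$ into the logical vocabulary and (ii) rewriting ``$G$ has a $[k]$-colorful subgraph satisfying $\Pi$'' as a single sentence whose length is bounded in terms of $|\phi|$ and $k$ only.

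First I would encode the coloring. Add $k$ fresh unary relation symbols $C_1,\dots,C_k$ and expand $G$ to a relational structure $G^c$ by interpreting $C_i^{G^c}=\{v\in V:c(v)=i\}$. The Gaifman graph of $G^c$ is $G$ itself, so the class of such colored structures still has nowhere dense Gaifman graphs, and the model-checking theorem applies to it --- either because Grohe et al.\ state and use it for structures carrying unary predicates, or because unary predicates are invisible to the sparsity parameters defining nowhere density (they do not affect the existence or edge density of shallow topological minors). In $G^c$ I may thus quantify over vertices and read off their colors through the $C_i$.

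Next I would build the sentence. Since $\Pi$ is expressible by a first-order formula $\phi(v_1,\dots,v_k)$ that holds on a $k$-tuple of $G$ exactly when the subgraph it induces satisfies $\Pi$ (the $\phi_p$ displayed above for induced paths is the prototype), set
\[
\psi\ :=\ \exists v_1\cdots\exists v_k\,\Big(\phi(v_1,\dots,v_k)\ \wedge\ \bigwedge_{1\le i<j\le k}\ \bigwedge_{1\le \ell\le k}\neg\big(C_\ell(v_i)\wedge C_\ell(v_j)\big)\Big).
\]
The second conjunct forces $v_1,\dots,v_k$ to get pairwise distinct colors, hence --- there being $k$ of them and exactly $k$ colors --- to use every color exactly once, i.e.\ to be $[k]$-colorful, and in particular pairwise distinct. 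Thus $G$ has a $[k]$-colorful subgraph satisfying $\Pi$ if and only if $G^c\models\psi$, and $|\psi|=O(|\phi|+k^3)$ depends only on $|\phi|$ and $k$. Feeding $G^c$ and $\psi$ to the model-checking algorithm decides this in time $g(\varepsilon,|\psi|)\,|V|^{1+\varepsilon}=f(\varepsilon,|\phi|,k)\,|V|^{1+\varepsilon}$, the claimed bound.

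The only non-bookkeeping point is the passage to colored structures; I would settle it by invoking Grohe et al.'s statement in its colored form, or by the one-line shallow-minor argument above. A minor subtlety is the precise reading of ``subgraph'': for induced-subgraph properties $\phi$ has exactly $k$ free variables as above, whereas for genuinely non-induced subgraphs one additionally guesses, via $O(k^2)$ extra quantified ``edge-selection'' variables or an equivalent bounded case distinction, which of the $\binom{k}{2}$ candidate pairs are edges of the chosen subgraph --- still a blow-up bounded by a function of $|\phi|$ and $k$. Finally, although the statement only asks to decide existence, a witnessing colorful $\Pi$-subgraph is recovered by the standard self-reduction (pin the color-$i$ vertex one at a time with a constant symbol and re-run) at the cost of an extra $|V|^{O(1)}$ factor --- which is the form \Cref{lem:gen:exact} needs in order to assemble diverse solutions.
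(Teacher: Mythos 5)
Your proposal is correct and follows essentially the same route as the paper, which cites this theorem directly from Grohe et al.\ and illustrates exactly your encoding (unary color predicates $C_\ell$, the pairwise-distinct-color conjunct $\phi_c$, and the existential closure $\phi$) in the induced-path example immediately preceding the statement. The only content you add beyond the paper's sketch --- that unary predicates preserve nowhere density and that a witness can be recovered by self-reduction --- is standard and correctly handled.
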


We do not give the definition of nowhere dense classes of graphs (See~\cite{Grohe:Deciding:2017} for details), while it is worth mentioning that many sparse graph classes are included in these classes, such as planar graphs, bounded-treewidth graphs, and $H$-minor free graphs.
By combining \Cref{lem:gen:exact} and \Cref{thm:FO-fpt}, we have the following theorem.

\begin{theorem}\label{thm:diverse-FO}
    Let $G = (V, E)$ be a graph that is in a nowhere dense class of graphs.
    Let $\Pi$ be a property on graphs that is expressible by a formula $\phi$ in first-order logic. 
    Then, we can find $r$ subgraphs $H_1, \ldots, H_r$ of $k$ vertices satisfying $\Pi$ such that $\minham(V(H_1), \ldots, V(H_r))$ or $\sumham(V(H_1), \ldots, V(H_r))$ is maximized, in time $f(|\phi|, k, r)|V|^{O(1)}$.
\end{theorem}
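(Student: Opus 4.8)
The plan is to instantiate the color‑coding framework of \Cref{lem:gen:exact} with ground set $U := V(G)$ and the Boolean function $\Pi \colon 2^{V(G)} \to \{0,1\}$ given by $\Pi(X) = 1$ exactly when $|X| = k$ and the induced subgraph $G[X]$ satisfies the first‑order property expressed by $\phi$. Then a $\Pi$‑set is precisely a $k$‑vertex set whose induced subgraph is one of the $H_i$ we want, and the objective $\minham(V(H_1),\dots,V(H_r))$ or $\sumham(V(H_1),\dots,V(H_r))$ coincides with $\minham$ or $\sumham$ of these $\Pi$‑sets. By \Cref{lem:gen:exact} it therefore suffices to design an algorithm $\mathcal A$ which, given a coloring $c \colon V(G) \to [kr]$ and a color set $C \subseteq [kr]$ with $|C| = k$, decides whether a $C$‑colorful $\Pi$‑set exists (and produces one); feeding such an $\mathcal A$ into \Cref{lem:gen:exact} immediately yields the claimed bound $f(|\phi|,k,r)\,|V(G)|^{O(1)}$, with the $\sumham$ case covered by the remark accompanying \Cref{lem:gen:exact}.

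First I would reduce the colorful task to a single call to \Cref{thm:FO-fpt}. Given $C$, set $V_C := c^{-1}(C)$, fix any bijection $\pi \colon C \to [k]$, and equip the induced subgraph $G[V_C]$ with the total coloring $\pi \circ (c|_{V_C}) \colon V_C \to [k]$. Since $V_C \subseteq V(G)$, the graph $G[V_C]$ still lies in the ambient nowhere dense class, and a set $X \subseteq V(G)$ is a $C$‑colorful $\Pi$‑set of $G$ if and only if $X \subseteq V_C$, the $k$ vertices of $X$ carry all $k$ relabeled colors, and $G[X]$ satisfies $\phi$ — that is, if and only if $G[V_C]$ has a $[k]$‑colorful subgraph satisfying $\Pi$ under the new coloring. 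This is exactly the question answered by \Cref{thm:FO-fpt} (if one prefers, the colorfulness condition can be folded into a single sentence $\psi := \exists v_1 \dots v_k\bigl( \bigwedge_{1 \le i < j \le k} v_i \neq v_j \wedge \bigwedge_{\ell=1}^{k} \bigvee_{i=1}^{k} C_\ell(v_i) \wedge \phi(v_1,\dots,v_k) \bigr)$, which is first‑order and has length $O(|\phi| + k^2)$). Applying \Cref{thm:FO-fpt} with, say, $\varepsilon = 1$ decides this, and with routine bookkeeping also returns a witnessing $C$‑colorful $\Pi$‑set, in time $h(|\phi|,k)\,|V_C|^{2} \le h(|\phi|,k)\,|V(G)|^{2}$ for some computable $h$; this is our $\mathcal A$.

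Finally I would feed $\mathcal A$ into \Cref{lem:gen:exact}: it checks feasibility of each of the $\binom{kr}{k}$ color sets $C$ by a call to $\mathcal A$, searches over $r$‑tuples of feasible color sets to maximize $\minham$ (respectively $\sumham$), and reconstructs the corresponding $\Pi$‑sets, for a total running time $e^{rk}\bigl( \binom{kr}{k} h(|\phi|,k)\,|V(G)|^{O(1)} + \binom{kr}{k}^{r} (kr)^{O(1)} \bigr) = f(|\phi|,k,r)\,|V(G)|^{O(1)}$. I do not expect a genuinely hard step here: the whole argument is a composition of \Cref{lem:gen:exact} with the metatheorem of Grohe et al.\ (\Cref{thm:FO-fpt}). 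The only points needing care are bookkeeping ones, and the main one is checking that the reduction of the previous paragraph stays within reach of \Cref{thm:FO-fpt}: restricting to $V_C$ and relabeling colors must keep the structure in a nowhere dense class (it does, since nowhere denseness passes to subgraphs and is unaffected by unary predicates), the combined colorfulness‑plus‑$\phi$ formula must have length bounded by a function of $|\phi|$ and $k$ alone (it does, being $O(|\phi|+k^2)$), and the decision algorithm of \Cref{thm:FO-fpt} must be upgraded to output a witness so that \Cref{lem:gen:exact} can actually return the subgraphs $H_1,\dots,H_r$.
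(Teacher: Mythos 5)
Your proposal is correct and is essentially the paper's own argument: the paper proves this theorem simply by combining \Cref{lem:gen:exact} with the metatheorem of Grohe et al.\ (\Cref{thm:FO-fpt}), using FO-expressible colorfulness constraints exactly as you describe. The only difference is that you spell out the bookkeeping the paper leaves implicit (restricting to $c^{-1}(C)$ and relabeling the $k$ colors so that the $kr$-coloring of \Cref{lem:gen:exact} meets the $k$-coloring interface of \Cref{thm:FO-fpt}, and noting that nowhere denseness is preserved under taking subgraphs), which is a welcome clarification rather than a deviation.
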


\section{Acknowledgements}
The authors thank anonymous referees for valuable comments.
This work was partially supported by JSPS Kakenhi Grant Numbers JP18K11168, JP18K11169, JP18H04091, JP19J10761, JP19K21537, JP20K19742, and JP20H05793.

\bibliographystyle{plain}
\bibliography{ref}

\end{document}